\newtheoremstyle{mytheoremstyle} 
    {\topsep}                    
    {\topsep}                    
    {\itshape}                   
    {}                           
    {\bfseries}                   
    {.}                          
    {.5em}                       
    {}  
\theoremstyle{mytheoremstyle}
\newtheorem*{thm*}{Theorem}
\newtheorem*{prop*}{Proposition}
\newtheorem*{def*}{Definition}
\newtheorem{lem}{Lemma}
\newtheorem{prop}{Proposition}
\definecolor{myciteColor}{rgb}{0.0,0.5,0.23}
\newcommand{\czqf}[0]{$\mathrm{CZ}^\mathrm{qf}$}
\newcommand{\czf}[0]{$\mathrm{CZ}^\mathrm{f}$}
\newcommand{\braid}[0]{$\mathrm{BRAID}$}
\newcommand{\Sf}[0]{$\mathrm{S}^\mathrm{f}$}
\newcommand{\Tf}[0]{$\mathrm{T}^\mathrm{f}$}
\begin{document}
\title{Fault-tolerant fermionic quantum computing}
\author{Alexander Schuckert\orcidlink{0000-0002-9969-7391}${}^*$} %
\affiliation{Joint Center for Quantum Information and Computer Science,
University of Maryland and NIST, College Park, Maryland 20742, USA}
\affiliation{Joint Quantum Institute,
University of Maryland and NIST, College Park, Maryland 20742, USA}
\author{Eleanor Crane\orcidlink{0000-0002-2752-6462}} 
\affiliation{Departments of Physics and Mechanical Engineering, Co-Design Center for Quantum Advantage, Massachusetts Institute of Technology, Cambridge, Massachusetts 02139, USA}
\author{Alexey V. Gorshkov\orcidlink{0000-0003-0509-3421}}
\affiliation{Joint Center for Quantum Information and Computer Science,
University of Maryland and NIST, College Park, Maryland 20742, USA}
\affiliation{Joint Quantum Institute, University of Maryland and NIST, College Park, Maryland 20742, USA}
\author{Mohammad Hafezi\orcidlink{0000-0003-1679-4880}}
\affiliation{Joint Center for Quantum Information and Computer Science,
University of Maryland and NIST, College Park, Maryland 20742, USA}
\affiliation{Joint Quantum Institute, University of Maryland and NIST, College Park, Maryland 20742, USA}
\author{Michael J. Gullans\orcidlink{0000-0003-3974-2987}${}^*$}
\affiliation{Joint Center for Quantum Information and Computer Science,
University of Maryland and NIST, College Park, Maryland 20742, USA}
\def\thefootnote{*}\footnotetext{Correspondence to: schuckertalexander@gmail.com, mgullans@umd.edu}
\date{\today}
\maketitle

\textbf{Simulating the dynamics of electrons and other fermionic particles in quantum chemistry~\cite{Mcardle2020}, materials science~\cite{Bauer2020}, and high-energy physics~\cite{Bauer2023} is one of the most promising applications of fault-tolerant quantum computers. However, the overhead in mapping time evolution under fermionic Hamiltonians to qubit gates~\cite{Abrams1997,Bravyi2002,Derby2021,Chen2024} renders this endeavor challenging~\cite{Stanisic2022, Hemery2023}. We introduce fermion-qubit fault-tolerant quantum computing, a framework which removes this overhead altogether. Using native fermionic operations we first construct a repetition code which corrects phase errors only. Within a fermionic color code, which corrects for both phase and loss errors, we then realize a universal fermionic gate set, including transversal fermionic Clifford gates. Interfacing with qubit color codes we introduce qubit-fermion fault-tolerant computation, which allows for qubit-controlled fermionic time evolution, a crucial subroutine in state-of-the-art quantum algorithms for simulating fermions~\cite{Childs2012,Low_2017,Babbush2018,Kivlichan2020}.  
As an application, we consider simulating crystalline materials~\cite{Babbush2018}, finding an exponential improvement in circuit depth for a single time step from $\mathcal{O}(N)$ to $\mathcal{O}(\log(N))$ with respect to lattice site number $N$ while retaining a fermion-site count of $\tilde{\mathcal{O}}(N)$, implying a linear-in-$N$ end-to-end gate depth for simulating materials, as opposed to quadratic in previous approaches. We also introduce a fermion-inspired qubit algorithm with $O(\mathrm{log}(N)$ depth, but which uses a prohibitive number of additional ancilla qubits compared to fermionic hardware. 
We show how our framework can be implemented in neutral atoms, overcoming the apparent inability of neutral atoms to implement non-number-conserving gates. Our work opens the door to fermion-qubit fault-tolerant quantum computation in platforms with native fermions such as neutral atoms, quantum dots and donors in silicon, with applications in quantum chemistry, material science, and high-energy physics.}

    \begin{figure}[ht]
    \centering
    \includegraphics[width=\columnwidth]{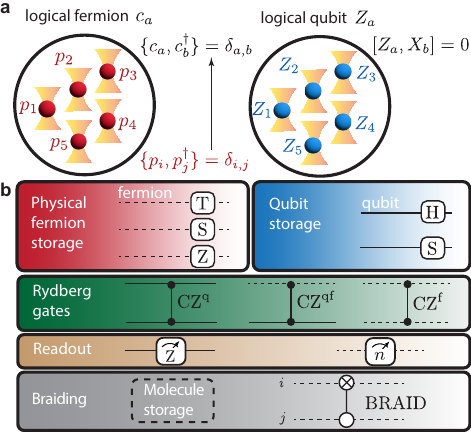}
    \caption{\textbf{Logical fermions and architecture.} \textbf{a}, A logical qubit is encoded in a codeblock formed by many physical qubits. Similarly, we encode a logical fermion into a codeblock formed by many physical fermions. While logical qubit operators commute, logical fermionic operators anti-commute between two codeblocks. Physical fermions can be encoded in fermionic neutral atoms trapped in optical tweezers or lattices (indicated by hour-glass shapes) or in electrons trapped by electric fields in quantum dots or donors in silicon (Methods). \textbf{b}, Neutral-atom architecture, circuit notation, and universal qubit-fermion gate set. Single-fermion gates are performed by detuning the tweezer potential; single-qubit gates by Raman lasers; $\mathrm{CZ}$ gates by Rydberg lasers; braiding gates by photodissociation of molecules as well as inter-tweezer tunneling. In the braiding gate, the site $i$, c.f.~definition in Eq.~\eqref{eq:braid}, is indicated by a cross. We distinguish fermion sites from qubits by using dashed lines and drop the ${}^\mathrm{f}$ superscript in circuits.
    }\label{fig:1}
\end{figure}

Solving the dynamics of many strongly-interacting fermionic quantum particles is central to understanding the dynamics of quarks in the early universe, the reaction dynamics of molecules, and high-temperature superconductivity. While qubit quantum computers promise to help solve these open problems, simulating the evolution of fermions encoded in qubits encounters a large overhead in both the number of two-qubit gates and the circuit depth~\cite{Bravyi2002,Verstraete2005}, which means that a solution of fermionic problems with near-term~\cite{Stanisic2022,Hemery2023} and even fault-tolerant quantum computers~\cite{reiher2017,Babbush2018} is extremely challenging. Quantum simulation experiments with cold fermionic atoms have made major progress in understanding continuum~\cite{greiner2003,bartenstein2004,zwierlein2006,cetina2016,yan2024,vivanco2025} and lattice systems~\cite{cheuk2016a,Mazurenko2017,koepsell2021,xu2025}. These experiments have made steps towards digital computation by trapping individual fermionic atoms in optical tweezers~\cite{Murmann2015,Spar2022}. Additionally, qubits have been encoded into pairs of fermions~\cite{hartke2022}. Quantum simulations using trapped electrons in the solid state have also been demonstrated~\cite{Salfi2016,dehollain2020,Wang2022}. However, turning these fermionic platforms from a tool of scientific exploration into a computational tool requires error-correction and fault-tolerance techniques, for which no practical scheme has been found so far.

Here, we close this gap by introducing a framework for \emph{fermion-to-fermion} fault-tolerant quantum computation. We first prove a theorem that fermionic error correction is not possible using only number-conserving gates, which are most easily implemented in experimental platforms. We then show how to avoid this theorem by using a non-number-conserving gate and concepts from Majorana error correction. This enables us to explicitly show how to perform a universal fermionic gate set including fault-tolerant transversal gates. To enable usage of state-of-the-art quantum algorithms requiring qubit-controlled gates, we introduce qubit-fermion fault-tolerant computation. We apply our framework to time evolution under the Hamiltonian of a crystalline material and study the advantages of our framework compared to all-qubit quantum computation. Finally, we show how to realize these codes and their operations in neutral atoms by removing the previously assumed restriction to number-conserving gates in neutral atoms and introducing the physics behind the implementation of a non-number-conserving braiding gate.

\section*{Fermion-to-fermion error correction gate set}

Our central goal is the construction of an error correcting code that encodes a logical fermion, which we define as a creation operator $c^\dagger_a$ anti-commuting between different codeblocks $a$ and $b$, i.e.~$\lbrace c_a^\dagger,c_b\rbrace=\delta_{ab}$, see Fig.~\ref{fig:1}a. This is by contrast to fermionic operators encoded into a single codeblock of a topological error correcting code~\cite{Landahl2023,Chen2024,McLauchlan2024}, which are difficult to address via transversal gates~\cite{zhou2024} and whose efficiency degrades when applied to fermion models that are not local in 2D. Constructing logical operators that anticommute between different codeblocks is impossible using physical qubits because no qubit operator can be constructed which anticommutes between separate codeblocks. Instead, we use physical fermionic degrees of freedom $p^\dagger_i$ acting on physical sites $i$ and fulfilling $\lbrace p_i^\dagger,p_j\rbrace=\delta_{ij}$ to encode logical fermions. A physical fermion can be realized for example by a fermionic atom trapped with light or an electron in a quantum dot. For concreteness, we will focus on the former implementation and comment on a quantum dot implementation in the Methods. As a second crucial ingredient, we interface these fermions with qubits, which are encoded in two internal energy levels of an atom.

We first consider a fermion-number-conserving gate set acting on many fermionic and qubit sites labeled by indices $i,j$,
\begin{align}
\mathrm{T}^\mathrm{f}_i&=\exp\left( i(\pi/4) n_i\right),\\
\mathrm{CZ}^\mathrm{f}_{ij}&=\exp\left(i\pi n_i n_j\right),\\
\mathrm{CZ}^{\mathrm{qf}}_{ij}&=\exp\left(i\pi n^\mathrm{q}_i n_j\right), 
\label{eq:gateset}
\end{align}
along with the qubit Hadamard gate $\mathrm{H}$, and the qubit phase gate $\mathrm{S}$. For this gate set to be universal (see Methods for a proof), we also need a gate which moves fermions between sites, $\sqrt{\mathrm{iSWAP}}^\mathrm{f}_{ij}=\exp\left(i (\pi/4)\left(p_i^\dagger p_j + h.c. \right) \right)$. For convenience, we also define the gates $\mathrm{S}^\mathrm{f}_i=(\mathrm{T}^\mathrm{f}_i)^2$ and  $\mathrm{Z}^\mathrm{f}_i=(\mathrm{S}^\mathrm{f}_i)^2$ not necessary for universality. $\mathrm{CZ}^{\mathrm{qf}}$ is a qubit-fermion gate dependent on the fermionic number operator $n_j=p_j^\dagger p_j$ and its qubit analogue $n^\mathrm{q}_i=\ket{1}\bra{1}_i$, i.e.~the projector on the $\ket{1}$ state of the qubit. Note that, in principle, we do not require qubits for fermion universality~\cite{Bravyi2010}, but as we will show, being able to couple fermions to qubits will lead to key advantages. Neutral-atom platforms are considered to be constrained to this number-conserving gate set~\cite{GonzlezCuadra2023} as usually the atom number is conserved during gates and each tweezer is initially prepared in a number eigenstate.

The first question is therefore whether these number-conserving operations suffice to encode logical fermions, which is equivalent to  demanding that the codestates are eigenstates of the total fermion number operator. Unfortunately, such states are not sufficient:
\begin{thm*} An error-correcting code using eigenstates of the total fermion number operator as codestates does not have logical fermion operators.
\end{thm*}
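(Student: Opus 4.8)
The plan is to reach a contradiction from two facts: a logical fermion operator must be \emph{odd} under the physical fermion parity $P=(-1)^N$ with $N=\sum_i n_i$, whereas codestates that are all eigenstates of $N$ span a subspace of fixed parity on which $P$ can only act as a scalar.

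First I would establish the parity grading of the logical operators. Writing any physical operator as a sum of an even and an odd part under conjugation by $P$, I would invoke the elementary fermionic fact that operators supported on disjoint sets of modes can only anticommute if both are purely odd: even parts commute with everything on the complementary support and would therefore survive in the anticommutator. Since distinct codeblocks $a\neq b$ act on disjoint physical modes $p_i$ and the defining relation $\{c_a^\dagger,c_b\}=\delta_{ab}$ requires $\{c_a^\dagger,c_b\}=0$ for $a\neq b$, each $c_a^\dagger$ must be purely odd, i.e.\ $Pc_a^\dagger=-c_a^\dagger P$. I expect this to be the crux of the argument, because it is where the ``anticommuting between codeblocks'' definition of a logical fermion is converted into a concrete constraint on the $P$-grading; a logical \emph{qubit} operator is subject to no such constraint, since it may be parity-even.

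Next I would pin down the code space. Because the code space is a linear subspace and, by hypothesis, \emph{every} codestate is an eigenstate of $N$, any superposition $\alpha|\bar{0}\rangle+\beta|\bar{1}\rangle$ of codestates is again an $N$-eigenstate, which forces all codestates to share one eigenvalue $n$. The code space therefore lies in a single number sector, $P$ acts on it as the scalar $(-1)^n$, and in particular $P$ preserves the code space.

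Finally I would combine the two ingredients. A logical operator maps the code space into itself, so for any codestate $|\psi\rangle$ both $|\psi\rangle$ and $c_a^\dagger|\psi\rangle$ carry parity $(-1)^n$, giving $Pc_a^\dagger|\psi\rangle=(-1)^n c_a^\dagger|\psi\rangle$. Comparing with oddness, $Pc_a^\dagger|\psi\rangle=-c_a^\dagger P|\psi\rangle=-(-1)^n c_a^\dagger|\psi\rangle$, which forces $c_a^\dagger|\psi\rangle=0$ throughout the code space, and by the same reasoning $c_a|\psi\rangle=0$. Then $\{c_a,c_a^\dagger\}$ vanishes on the code space, contradicting the canonical anticommutation relation $\{c_a,c_a^\dagger\}=1$. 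Hence no logical fermion operators can exist. The one point I would be careful about is the reading of the hypothesis: if one only assumed the code space to be \emph{spanned} by number eigenstates of possibly different $N$, the single-sector step would fail, so I would state explicitly that ``codestates are eigenstates of $N$'' means every code vector is such an eigenstate.
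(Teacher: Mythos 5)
There is a genuine gap, and you put your finger on it yourself in your closing remark: your proof works only under the strengthened reading that \emph{every} vector in the code space is an eigenstate of $N_\mathrm{tot}=\sum_j n_j$, which is equivalent to assuming from the outset that the code space lies in a single number sector. A telltale sign is that the hypothesis ``error-correcting'' is never used anywhere in your argument. The paper's reading is the weaker one: the spanning codestates $\ket{0}_\mathrm{L},\ket{1}_\mathrm{L}$ are each number eigenstates, but possibly with \emph{different} eigenvalues. That weaker reading contains exactly the case your reading assumes away, and it is the physically central one: to build a logical fermion out of number-conserving operations one would naturally try $\ket{0}_\mathrm{L}$ with $n$ particles and $\ket{1}_\mathrm{L}$ with $n+1$, so that a number-raising operator connects them. (For a single mode, $\mathrm{span}\{\ket{0},\ket{1}\}$ admits $p^\dagger$ as a bona fide anticommuting ``logical'' operator; what kills such examples is not the eigenstate hypothesis but the failure of error detection, since $P\,n\,P\not\propto P$ there.) The paper closes precisely this hole with its Lemma 1: spanning by eigenstates gives $(1-P)N_\mathrm{tot}P=0$, the Knill--Laflamme condition for phase errors $E=n_j$ gives $PN_\mathrm{tot}P\propto P$, and together these force $N_\mathrm{tot}P\propto P$, i.e.\ a single number sector. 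Your ``single-sector step'' must be replaced by this argument; as written, you have proven a weaker theorem that does not rule out the codes one would actually try to build.

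Once that lemma is supplied, the rest of your argument is correct and is a genuinely different route from the paper's. Where the paper proves that fermionic logical operators have odd Majorana weight (its Lemma 2) and then shows by projector manipulations that $[O^\mathrm{L}_a,O^\mathrm{L}_b]P_aP_b=0$, you argue at the level of states: a parity-odd operator that preserves a fixed-parity subspace must annihilate it, contradicting $\{c_a,c_a^\dagger\}=1$. That is shorter and more elementary. Two details deserve care: (i) the grading argument should be spelled out --- writing $c_a^\dagger=E_a+O_a$ and $c_b=E_b+O_b$, the even and odd parts of the anticommutator give $E_aE_b=0$ and $E_aO_b+O_aE_b=0$, and you need both the nontriviality of $c_a^\dagger,c_b$ and the fact that products of nonzero operators on disjoint mode sets are nonzero to conclude $E_a=E_b=0$; (ii) you implicitly use that logical operators preserve the code space and that $\{c_a,c_a^\dagger\}=1$ is the $a=b$ case of the defining relation, both fine but worth stating.
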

The proof (see Methods) relies on the fact that codestates that are eigenstates of the total fermion number operator need to be eigenstates with the same number eigenvalue due to the Knill-Laflamme conditions. This means that logical operators cannot change the total fermion number and therefore need to be even weight in fermionic creation/annihilation operators. Hence, logical operators are bosonic. 

To circumvent this theorem, we use fermion parity eigenstates as codestates, i.e.~states that are superpositions of states that contain either only an odd or only an even number of fermions. A universal gate set on such states requires a non-number-conserving gate. In analogy to the braiding operation performed in Majorana nanowires~\cite{Nayak2008}, we define the braiding gate
\begin{equation}
\mathrm{BRAID}_{ij}=\exp\left(i \frac{\pi}{4} \left(p_i^\dagger- p_i\right) \left(p_j^\dagger+ p_j\right) \right),\label{eq:braid}
\end{equation}
which will also turn out to be a natural logical gate in our codes. 
We propose a scheme to implement this gate in neutral atoms at the end of this work. 

We envision a zoned architecture in which the gates are implemented in separate physical regions of the quantum processor, which has been experimentally realized for qubit-only neutral atom quantum computing in Ref.~\cite{Bluvstein2023}; we specify the zones in which each of the gates is performed in Fig.~\ref{fig:1}b, and introduce our circuit notation. 
 
To define our codes, it is convenient to introduce Majorana fermion operators, which are the real and imaginary parts $\gamma_i$, $\tilde \gamma_i$ of complex fermionic operators, $p_i^\dagger=\frac{1}{2}(\gamma_i + i\tilde\gamma_i)$, fulfilling $\lbrace \gamma_i,\gamma_j\rbrace =2\delta_{ij}=\lbrace \tilde \gamma_i,\tilde \gamma_j\rbrace $ and $\lbrace \gamma_i,\tilde \gamma_j\rbrace =0$. We construct  codes in which we encode two logical Majorana fermions $\gamma^\mathrm{L}$ and $\tilde \gamma^\mathrm{L}$. These are then combined into a logical complex fermion $c^\dagger=\frac{1}{2}(\gamma^\mathrm{L}+i\tilde \gamma^\mathrm{L})$. In order for logical  operators between different codeblocks to fulfill Majorana fermion anti-commutation relations, they each must be a product of an odd number of $\gamma_i$ and $\tilde \gamma_i$, i.e.~they must have odd operator weight. 

\section*{Repetition code for phase errors}

We start with a repetition code which only corrects for phase errors $E=n_i$. This is the dominant error source in neutral atoms~\cite{GonzlezCuadra2023}. As an introductory example, consider the two-site code
\begin{align}
\ket{0}_\mathrm{L} &= \frac{1}{\sqrt{2}}\left(\ket{00}-\ket{11}\right),\label{eq_simple_states1}\\\ket{1}_\mathrm{L} &= \frac{1}{\sqrt{2}}\left(\ket{10}+\ket{01}\right)\label{eq_simple_states2},
\end{align}
where $\ket{11}=p_2^\dagger p_1^\dagger \ket{00}$ and $p_1^\dagger\ket{00}=\ket{01}$ and $\ket{0}$, $\ket{1}$ denotes the absence, presence of a fermion, respectively.
The above code can detect a single phase error, but does not detect loss errors $E=p_i$. The code has two logical Majorana fermion operators $\gamma^\mathrm{L}=\gamma_1$ and $\tilde \gamma^\mathrm{L}=\tilde \gamma_2$, from which we form a logical complex fermion operator by $c^\dagger=\frac{1}{2}\left(\gamma_1 + i \tilde \gamma_2\right)$. This operator acts as $c^\dagger \ket{0}_\mathrm{L} = \ket{1}_\mathrm{L}$, $c^\dagger \ket{1}_\mathrm{L} = 0$. We therefore interpret $\ket{0}_\mathrm{L}$ and $\ket{1}_\mathrm{L}$ as logical fermionic Fock states. 
Two codeblocks $a$ and $b$ each encode such an operator and they fulfill $\lbrace c^\dagger_a, c_b^\dagger\rbrace =0$. Hence, $c^\dagger$ is indeed a logical fermion operator. This simple two-site code is also a stabilizer code, see Fig.~\ref{fig:2}a: its codestates are defined by the operator $i\tilde\gamma_1 \gamma_2$ which acts as the unity operator on both states, $i\tilde\gamma_1 \gamma_2\ket{0}_\mathrm{L}=\ket{0}_\mathrm{L}$ and $i\tilde\gamma_1 \gamma_2\ket{1}_\mathrm{L}=\ket{1}_\mathrm{L}$. This enables generalization to detecting $N-1$ phase errors by using $N$ complex physical sites and the stabilizers $i\tilde\gamma_i \gamma_{i+1}$ for $1<i<N$. For $N\geq 3$, we can also correct phase errors as we show below. This code can also be used as a purely error-detecting code for all errors except atom-loss errors on the edges. In fact, this is the code defined by the ground space of the Kitaev chain, and the two logical Majorana operators $\gamma_1$ and $\tilde \gamma_N$ are its well-known edge states used as qubits in Majorana nanowires~\cite{Nayak2008}. However, by contrast to nanowires, we do not implement the Kitaev Hamiltonian in our system. Hence, we cannot rely on the gap between the ground space of the Hamiltonian and the excited states in order to suppress errors and keep the computation in the ground space of the Hamiltonian. Instead, we require active error correction.

\begin{figure}
    \centering
    \includegraphics{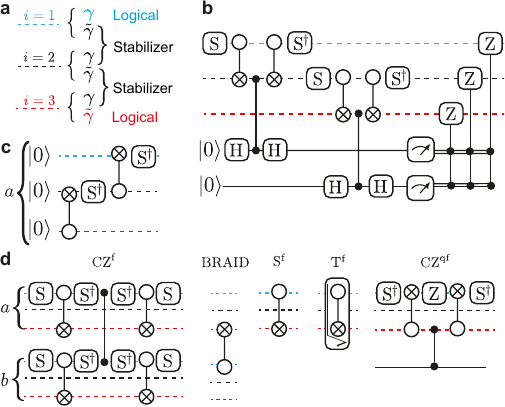}
    \caption{\textbf{Logical operations for a repetition code protecting against phase errors.} \textbf{a}, Illustration of splitting physical fermions for $N=3$ into virtual Majorana fermions. Stabilizers are given by $i\tilde\gamma_1\gamma_2$, $i\tilde\gamma_2\gamma_3$, and the logical fermionic creation operator is given by $c^\dagger=\frac{1}{2}(\gamma_1+i\tilde \gamma_3)$.  \textbf{b}, Syndrome measurement and correction. The two stabilizers are measured by using a Hadamard test on qubits. Conditioned on the measurement outcome, gates are performed to correct for errors. \textbf{c}, Encoding circuit, translated from Ref.~\cite{Mudassar2024}, realized as a chain of braidings. Note that $(\mathrm{S}^\mathrm{f})^\dagger$ can be interpreted as intra-site braiding. \textbf{d}, Logical gates. $\sqrt{\mathrm{BRAID}}_{31}=\exp\left(-\frac{\pi}{8} \tilde \gamma_3 \gamma_1\right)$, which only acts on the two edge atoms. See Methods for a derivation of these circuits.}
    \label{fig:2}
\end{figure}

For implementing active error correction, we need to measure where errors happened and correct for them, which we show how to do in Fig.~\ref{fig:2}b. We use our qubit-fermion gate to map a measurement of the fermion density $n$ to a measurement of the phase of the qubit, which we transform to a measurement of the population of the qubit using Hadamard gates. To measure the stabilizer instead of $n$ on the fermions we conjugate the qubit-fermion gate with two braiding gates as well as phase gates. An ancilla qubit is used for each of the stabilizers. Dependent on the measurement outcome, $\mathrm{Z}^\mathrm{f}$\, gates are performed on the fermions to correct for the errors. This scheme shows the first advantage of using qubits: destructive qubit readout is faster than a non-destructive $n$ measurement~\cite{Endres2019}. To prepare a logical codestate $\ket{0}_\mathrm{L}$, we use a chain of braiding gates---the $\mathrm{S}^\dagger$ gates effectively braid the two Majoranas on the same physical site, Fig.~\ref{fig:2}c.

Next, we discuss how to implement logical gates. We show the circuits in Fig.~\ref{fig:2}d, see Methods for a proof. The basic building block for the logical $\mathrm{CZ}^\mathrm{f}$ is the physical $\mathrm{CZ}^\mathrm{f}$. Similar to the stabilizer measurement, conjugation with braiding gates ``spreads'' the entanglement induced by the physical $\mathrm{CZ}^\mathrm{f}$ to the correct site, with $\mathrm{S}^\mathrm{f}$ gates selecting whether $\tilde \gamma$ or $\gamma$ is acted upon. The braiding gate is simple as it amounts to a physical braiding gate between $\gamma_1$ of one block and $\tilde \gamma_N$ of the other. The $\mathrm{S}^\mathrm{f}$ gate amounts to a braiding gate between the first and $N$th site of a single codeblock. Similarly, the \Tf\,gate is implemented as a braiding operation with half the duration of the \Sf\,gate. Finally, the qubit-fermion \czqf\, is implemented similarly to the $\mathrm{CZ}^\mathrm{f}$ by conjugating the physical \czqf\, with braid and phase gates. 

The utility of this code is based on the fact that, in neutral atoms, phase errors are orders of magnitude more frequent than atom-loss errors. However, because braiding gates convert a single phase error into two correlated atom-loss errors, which cannot be detected, the depth of the circuits which can be performed in practice is still limited by the phase coherence time. Hence, this code is not fault-tolerant even in the absence of atom-loss errors. The fact that logical gates only act on the edge fermions on the other hand reduces the depth of the circuits, as the gate overhead for implementing logical gates is independent of $N$. It might be beneficial to reduce gate depth further by  replacing the \czf gates and BRAIDs in Fig.~\ref{fig:2}d with their arbitrary-angle versions, which turns our gates to arbitrary-angle logical gates. However, for large-scale computations, a truly fault-tolerant code is necessary.

\section*{Fermion-qubit fault tolerance}

We now show how to construct an error-correcting code correcting for both phase errors $E=n$ and atom-loss errors $E=p$ and show how to perform fault-tolerant gadgets. In the context of encoding qubits in Majorana nanowire platforms, Bravyi, Terhal, and Leemhuis~\cite{Bravyi2010} noted that a particular class of qubit error correcting codes, so-called weakly self-dual Calderbank-Shor-Steane (CSS) codes, can be converted into a Majorana stabilizer code by replacing qubit operators in the stabilizers by Majorana fermions. Motivated by the prospect of high error-resilience for codes with odd-weight logical operators, they noted that for codes with an odd number of physical Majorana fermions, a single odd-weight logical operator can be encoded. This operator is given by the total parity, i.e.~the product of all Majorana operators. Our key observation is that, because there is an odd number of physical Majorana fermions, this logical operator is in fact a logical Majorana fermion operator. Even more importantly, while in Majorana nanowire platforms, a complex physical fermion would have to be encoded in two logical Majoranas which are encoded in spatially distinct physical Majoranas, our framework using complex physical fermions enables the encoding of two such Majorana codes, and therefore one complex logical fermion, in the \emph{same} set of physical fermions.  It turns out that a general prescription to constructing complex fermion codes from weakly self-dual CSS codes is to replace the $Z$ operators in the qubit code stabilizers by $\gamma$ operators and the qubit $X$ operators by $\tilde \gamma$ operators, see Fig.~\ref{fig:3}a. The logical Majorana operators are then given by the two parities $\gamma^\mathrm{L}=\prod_{i=1}^N  \gamma_i$ and $\tilde \gamma^\mathrm{L}=\prod_{i=1}^N \tilde \gamma_i$, from which we construct a single logical fermion operator $c^\dagger=\frac{1}{2}\left(\gamma^\mathrm{L}+i\tilde\gamma^\mathrm{L}\right)$. As an example, we consider the family of triangular honeycomb Majorana color codes, which encodes a single logical fermion with odd distance $d$ into $N=(3d^2+1)/4$ physical fermions. In particular, $d=3$ is the fermionic version of the Steane code~\cite{Steane1996}, Fig.~\ref{fig:3}a. Such codes can correct $(d-1)/2$ loss or phase errors. 

\begin{figure}
    \centering
    \includegraphics[width=\columnwidth]{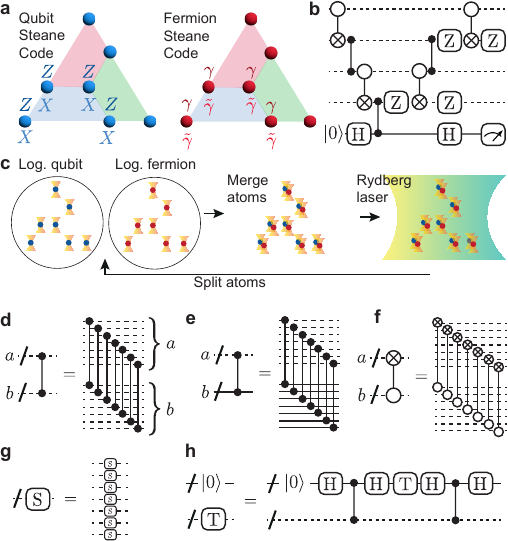}
    \caption{\textbf{Fermion-qubit fault-tolerant quantum computation.} \textbf{a}, Construction of the fermionic Steane code, the smallest member of the family of color codes we consider. Stabilizers are obtained by replacing products of $X$ ($Z$) operators in the qubit code with products of $\gamma$ ($\tilde \gamma$), respectively. Corners on areas with the same color are part of a stabilizer as shown for the bottom left area. \textbf{b}, Mid-circuit measurement of $-\gamma_1 \gamma_2 \gamma_3 \gamma_4$ (counting sites from bottom to top). By inverting the $\mathrm{BRAID}$s (i.e. $\mathrm{BRAID}_{12}\rightarrow \mathrm{BRAID}_{21}$), we measure  $-\tilde \gamma_1\tilde \gamma_2\tilde \gamma_3\tilde \gamma_4$, the other stabilizer. \textbf{c}, Neutral-atom implementation of transversal \czqf\, gates, performed by interleaving a logical qubit and fermion and applying a global Rydberg laser. This implements the circuit shown in subfigure e. \textbf{d}, Transversal \czf. This is implemented by the logical qubit in subfigure c with a logical fermion. \textbf{e}, Transversal qubit-fermion \czqf.  Transversal \textbf{f}, \braid\, and \textbf{g}, \Sf\, gates are similarly implemented by parallel operations. \textbf{h}, \Tf gate. Implemented by using a T gate applied on an ancilla qubit, which is then mapped onto the fermion. All the operations shown here are straightforwardly generalized to codes with larger (odd) $d$, see Methods.}
    \label{fig:3}
\end{figure}
 
Of central utility is the fact that we can use qubits for controlled operations. One example is the measurement of stabilizers using the Hadamard test, as shown in Fig.~\ref{fig:3}b. From these stabilizers, we can reconstruct whether a $\tilde \gamma$ or $\gamma$ error (or both) happened on a fermion. Such single-Majorana errors are corrected by coupling an ancilla site $a$ prepared in $\ket{0}$ to the site with the error, acting with $(\mathrm{BRAID}_{ai})^2=\gamma_i\tilde \gamma_a$ and then measuring the ancilla. The measurement circuit in Fig.~\ref{fig:3}b is made fault-tolerant by for example using the Shor scheme, i.e.~replacing the single ancillary qubit with a set of qubits in a cat state and repeatedly measuring the stabilizers. This stabilizer measurement is also used for fault-tolerant preparation of the logical codestates from a trivial starting state. Note that the correction of stabilizers is optional for both state preparation and error correction---instead of enforcing stabilizers to be +1, their value can instead be tracked~\cite{gottesman1998}. Moreover, we expect that error correction only needs to be applied after each gate, and not during movement, as demonstrated in neutral-atom qubit quantum computers~\cite{Bluvstein2023}.
  
  Magically, this fermion Steane code yields highly intuitive and easy-to-implement fault-tolerant logical gates. Transversal logical gates are implemented, similar to qubit Steane codes, by applying the corresponding physical gate in parallel on all atoms, c.f.~Fig.~\ref{fig:3}c-g, see Methods. This set of gates turns out to be the Majorana Clifford gates \braid, \czf, \Sf, which in analogy to qubit Clifford gates map a single Majorana operator string to another single Majorana string. These transversal Clifford gates are inherently fault-tolerant as an error on one physical codeblock can only spread to one physical fermion in the other codeblock. 

One of the key tools enabled by our approach is the interfacing of our fermionic Steane code with a qubit Steane code (with matching $N$) using the qubit-fermion \czqf\,gate: by applying physical \czqf\,in parallel (see Fig.~\ref{fig:3}e), we realize a logical, transversal \czqf. This enables fault-tolerant qubit-fermion quantum computation. In particular, we will discuss in the next section that this has immense utility for implementing modern quantum algorithms for simulating fermions. 

This gate set of $\mathrm{BRAID}$, \czqf, \czf, \Sf, together with $\mathrm{H}$ and $\mathrm{S}$ on qubits, forms the ``qubit-fermion Clifford group'' as it maps a Majorana-qubit string to a single other Majorana-qubit string. This gate set needs to be combined with a \Tf gate or a qubit T gate to enable universal qubit-fermion computation. While Majorana fermion schemes for T-gate synthesis have been proposed~\cite{Litinski2018}, they are more involved than qubit schemes due to the requirement to create two-fermion magic states. We instead propose using a qubit $\mathrm{T}$ gate applied on a qubit Steane code and transferring it to the fermions using our transversal logical \czf gate, Fig.~\ref{fig:3}h. This therefore enables a significant simplification compared to Majorana $\mathrm{T}$-gate synthesis.

\section*{Logical qubit-fermion quantum simulation}
\begin{figure}
    \centering
    \includegraphics[width=\columnwidth]{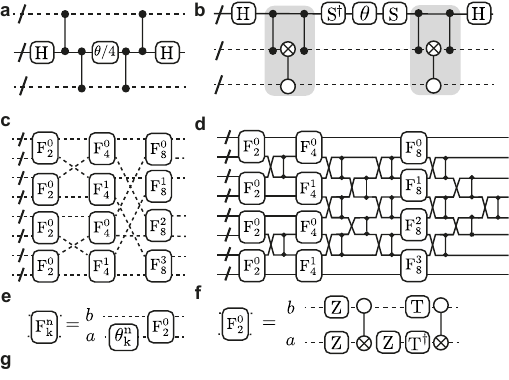}\\
    \begin{tabular}{|c|c|c||c|}
    \hline\multicolumn{1}{|c|}{} & \multicolumn{2}{c||}{Qubits}                     & \multicolumn{1}{c|}{Fermions} \\
    \hline
         & fSWAP~\cite{Kivlichan2018} & pFFFT+pINT  & FFFT+pINT\\
         \hline 
        Depth & $\mathcal{O}(N)$  & $\mathcal{O}(\log{N})$ & $\mathcal{O}(\log{N})$\\
        \hline
        Ancillas & 0 & $\mathcal{O}(N^2)$ + $\mathcal{O}(N\log{N})$ & 0 + $\mathcal{O}(N\log{N})$\\
        \hline
        Volume & $\mathcal{O}(N^2)$ & $\tilde{\mathcal{O}}(N^2)$ & $\tilde{\mathcal{O}}(N)$ \\
        \hline
    \end{tabular}
    \caption{\textbf{Fermion-qubit logical quantum simulation primitives and comparison to the qubit-only approach.} \textbf{a}, Qubit-controlled arbitrary-angle interaction gate. The box with label ``$\theta/4$'' is a qubit gate $\exp(-i(\theta'/2)X)$ with $\theta'=\theta/4$, synthesised from discrete gates using the Solovay-Kitaev algorithm. See Fig.~\ref{fig:1} for our convention of the labeling of the braiding gate. \textbf{d}, Qubit-controlled arbitrary-angle braiding gate. The gray box is the qubit-controlled braiding gate $\exp(-(\pi/4)Z \tilde \gamma_i \gamma_j)$. \textbf{c}, FFFT circuit implemented using logical fermions. Crossing lines indicate swapping, implemented by movement of the  atoms. \textbf{d}, Corresponding FFFT circuit implemented using logical qubits, c.f.~Ref.~\cite{Verstraete2009}. \textbf{e}, Decomposition of the n-mode FFFT into an arbitrary-angle phase gate $\exp(-i\theta n)$ with $\theta=\theta_k^n=2\pi k/n$ and a two-mode FFFT. \textbf{f}, Compilation of the two-mode FFFT into fermionic fault-tolerant gates.  \textbf{g}, Scaling of depth, number of ancillas and circuit volume (defined as the product of depth and the number of ancillas) with the number of sites $N$ for a single Trotter step of the plane-wave basis Hamiltonian in Eq.~\eqref{eq_hubbard}. We compare the lowest-depth previous method~\cite{Babbush2018}, our qubit-only and our fermion-qubit algorithms. ``pINT'' refers to a parallelized interactions scheme using the quantum FFT~\cite{low2019a}, ``pFFFT'' to our version of the FFFT which using the parallelization of CZ circuits introduced in Ref.~\cite{hoyer2005}.}
    \label{fig:4}
\end{figure}
A prime area of application for fermionic quantum computing is simulating electrons in molecules and materials. In second quantized approaches, the continuous-space wavefunctions of electrons are cast into a discrete lattice model through a choice of basis, which leads to a Hamiltonian coupling fermionic operators as $H_\mathrm{qchem}=\sum_{ij}J_{ij}c^\dagger_i c_j+ \sum_{ijkl}V_{ijkl} c^\dagger_i c^\dagger_j c_k  c_l$, with $J_{ij}$ and $V_{ijkl}$ constants that depend on the choice of basis. This generally leads to a number of Hamiltonian terms scaling as $\sim N^4$ with the basis size $N$. For a Trotterized simulation, this implies the same scaling for the number of gates per Trotter step. However, the plane-wave-dual basis~\cite{Babbush2018}, which is particularly suitable for simulating crystalline materials, enables writing the chemistry Hamiltonian as
\begin{align}
    H&= \mathrm{FFFT}^\dagger \sum_i \epsilon_i n_i \mathrm{FFFT} \notag\\&\quad+ \sum_{i<j} U_{ij} \left(n_i-\frac{1}{2}\right) \left(n_j-\frac{1}{2}\right),\label{eq_hubbard}
\end{align}
where $\epsilon_i$, $\mu_i$, and $U_{ij}$ are classically calculable coefficients and $\mathrm{FFFT}$ denotes the fermionic fast Fourier transform ~\cite{Verstraete2009}. This formulation reduces the scaling of the number of terms to $\sim N^2$. Efficient algorithms for dynamics, as well as for ground and finite-temperature state preparation, in addition often require controlling this Hamiltonian on a qubit~\cite{Childs2012,Low_2017}.

Our next goal is therefore to implement qubit-controlled arbitrary-angle evolution under each of the terms in Eq.~\eqref{eq_hubbard} individually. Time evolution under the Hamiltonian $\epsilon_i n_i$ until time $t$ is implemented by replacing the T gate in Fig.~\ref{fig:3}h with an approximate arbitrary-angle rotation of angle $\epsilon_i t$, where $t$ corresponds to time, synthesized from discrete gates with e.g. the Solovay-Kitaev algorithm. The interaction gate $\exp\left(-iZ\frac{\theta}{2}(n_i-\frac{1}{2})(n_j-\frac{1}{2})\right)$ is implemented by the circuit shown in Fig.~\ref{fig:4}a, which maps the arbitrary-angle single-qubit gate onto the fermions using \czqf\, gates. A related gadget shown in Fig.~\ref{fig:3}b realizes an arbitrary-angle qubit-controlled braiding gate $\exp(-\frac{\theta}{2}Z\tilde \gamma_i \gamma_j)$, where again \czqf\, gates map the phase gate from the qubit to a fermion. The \braid s then ``spread'' the gate onto the second fermion. Two such gadgets are combined to yield the qubit-controlled hopping gate via $\exp\left(iJ_{ij}\Delta t Z\left(c_i^\dagger c_j + h.c. \right) \right)=\exp(-\frac{J\Delta t}{2} Z\tilde \gamma_j\gamma_i)\exp(-\frac{J\Delta t}{2} Z\tilde \gamma_i\gamma_j)$. These arbitrary-angle gates also enable time evolution under the more general Hamiltonian $H_\mathrm{qchem}$ by using the compilations in Ref.~\cite{GonzlezCuadra2023}.  
Finally, we need to implement the FFFT, which may be reduced to the 1D FFFT circuit shown in Fig.~\ref{fig:4}c, c.f.~Ref.~\cite{Verstraete2009}. In analogy to the classical fast Fourier transform, the FFFT uses a divide-and-conquer reduction to two-site Fourier transforms $F^k_N$ (Fig.~\ref{fig:4}e, f) whose phases depend on the number of sites $N$ and an index $k$. Despite the fact that it is a 1D transformation, it requires non-local connectivity, which is implemented in neutral atoms by moving the tweezers~\cite{Bluvstein2022}. 

\section*{Advantages over qubit quantum simulation}

We now show that our approach yields a large advantage over currently known qubit algorithms for simulating Trotterized time evolution under the Hamiltonian of crystalline materials.

The most depth-efficient qubit algorithm to date achieves $\mathcal{O}(N)$ depth per Trotter step through the use of fermionic-SWAP (fSWAP=SWAP$\cdot$CZ) networks, where $N$ is the number of fermionic sites in the model~\cite{Kivlichan2018}. Currently-known local encodings of qubits into fermions~\cite{Verstraete2005} yield little advantage for such all-to-all connected hoppings. The FFFT described above can also be used in qubits with fSWAP networks, where atom rearrangements are replaced by layers of fSWAP gates, Fig.~\ref{fig:4}d. Because of the parallelizability restriction imposed by the CZ gates, this leads to gate depth $\mathcal{O}(N\log N)$. We improve the gate depth by using $\mathcal{O}(N^2)$ ancillas and log-depth Clifford-only fanout gates. This enables a parallelization of the CZ gates in the FFFT circuit~\cite{hoyer2005}, see Methods. This ``pFFFT'' method achieves $\mathcal{O}(\mathrm{log}(N))$ depth, but the $\mathcal{O}(N^2)$ ancillas mean that the circuit volume stays $\mathcal{O}(N^2)$ just like for the fSWAP implementation. Furthermore, the ancillas are \emph{logical} ancillas, meaning the overhead in terms of physical qubits for these $\mathcal{O}(N^2)$ ancillas is prohibitive.

By contrast, in our fermionic platform, the fact that fSWAP gates are implemented by atom rearrangements (Fig.~\ref{fig:4}c) means that a depth of $\mathcal{O}(\log N)$ can be implemented without ancillas, yielding $\mathcal{O}(N)$ space and $\tilde{\mathcal{O}}(N)$ volume complexity.  We note that the $N$-scaling of movement time does not prevent the exponential depth reduction from leading to an exponential time saving. This is because the constant overhead of error correction dominates, see Methods. We achieve another depth improvement through our compilation of the constituent FFFT subroutines (see Fig.~\ref{fig:4}e, f), which only takes half the previously-known T-gate depth~\cite{Kivlichan2020}.

At first sight, the logarithmic depth of the kinetic term is not useful because the interaction term requires $\mathcal{O}(N)$ depth because of the parallelizability restriction imposed by the $\mathcal{O}(N^2)$ terms. However, similarly to the qubit version of the FFFT, we can use ancillas to reduce the circuit depth by using the translational invariance of $U_{ij}$. This enables using a quantum circuit of the classical fast Fourier transformation~\cite{low2019a,asaka2020}, see Methods. Because the Fourier transform of $n_i$ is a complex number, we encode it in a binary register with $\mathcal{O}(\log(N))$ ancillas. Therefore, this ``parallelized interactions'' scheme in total uses $\mathcal{O}(N\log(N))$ ancillas, for both qubit and fermion quantum computers. After this transform, the density interactions are diagonal and implementable in depth $\mathcal{O}(\mathrm{log}(N))$. 

Taking both interaction and kinetic terms together, we thus achieve a log-depth Trotter step for materials. For qubit quantum computers, it uses $\mathcal{O}(N^2)$ qubits, but for fermionic quantum computers only $\mathcal{O}(N\log(N))$ fermionic sites. Therefore, the circuit volume is a factor of $N/\log(N)$ lower for fermions. 

The remaining question is how this log-depth Trotter step reflects on the end-to-end complexity of time-evolving up to time $t$. We use the Trotter scheme introduced in Ref.~\cite{childs2021}, which uses $\mathcal{O}(Nt)$ Trotter steps for fixed simulation accuracy and large, fixed Trotter order. Using our log-depth Trotter steps, we therefore find an overall depth complexity of $\tilde {\mathcal{O}}(Nt)$, which is essentially linear in both time and the number of sites. While we show that both qubit and fermion quantum computers can reach this scaling, only the fermion version requires a linear number of modes, while at least within our approach, qubits need a prohibitively expensive quadratic number of qubits using fSWAP networks. 

For the more general $H_\mathrm{qchem}$, at first glance, no depth advantage appears possible, since current state-of-the-art schemes combining double factorization in the molecular basis with fSWAP networks achieve depth $\mathcal{O}(N^2)$ without ancillas~\cite{motta2021} and $\mathcal{O}(N)$ depth with ancillas~\cite{luo2025} for a single Trotter step, independent of the molecule size. By contrast, fermionic quantum computers can reach depth $\mathcal{O}(N)$ for large molecules, see Methods, which to our knowledge is the first algorithm for time evolution under the chemistry Hamiltonian which achieves linear depth per Trotter step without using ancillas. For Hamiltonians which do not allow for a low-rank factorization, such as the sparse Sachdev-Ye-Kitaev model~\cite{xu2020}, we expect such depth speedups to become apparent already at small system sizes.

Within our fault-tolerant gate set, no advantage in the number of T gates is possible if the T-gate count is the only parameter which is optimised for. This can be directly seen by mapping the fermionic gates to qubit gates using the Jordan-Wigner transform and noting that only the \Tf\,gate requires T gates (Methods). We note that steady progress in T-gate synthesis means they do not completely dominate the simulation cost over Clifford gates~\cite{zhou2024}. Moreover, our depth reductions directly translate into a large space-time volume saving compared to qubit platforms. Therefore, we expect advantages in terms of the total runtime of a fermionic simulation.

\begin{figure}
    \centering
    \includegraphics[width=\columnwidth]{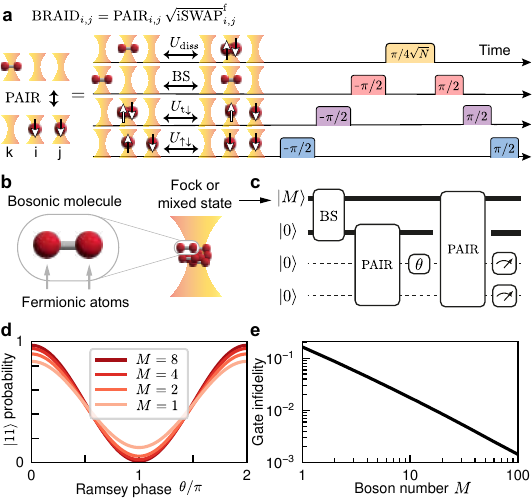}    \caption{\textbf{Implementation of the braiding gate in neutral atoms.} \textbf{a}, Illustration of the sequence of operations to implement an effective braiding gate. Braiding is a product of pairing and the tunneling gate $\sqrt{\mathrm{iSWAP}}^\mathrm{f}=U_{t\downarrow}(\pi/4)$. $U_{\mathrm{t}\downarrow}(\theta)$ and $U_{\uparrow\downarrow}(\theta)$ act only on fermions, while the beamsplitter acts only on molecules, and $U_\mathrm{diss}$ is a fermion-molecule gate. The pairing gate dissociates a bosonic molecule into two fermions on two separate sites. The pulse duration of the dissociation interaction depends on the average number $M=\braket{b^\dagger b}$ of molecules in the tweezer. \textbf{b}, The molecules are formed from the same atomic species as the fermionic atoms used. \textbf{c}, Ramsey sequence to probe the effective coherence of a ``Bell-pair'' $(\ket{00}+i\ket{11})/\sqrt{2}$ created by the effective pairing gate. The phase gate is defined as $\exp(-i\theta n)$. The beamsplitter gate, defined above Eq.~\eqref{eq:pair}, has phase $\theta=\pi/4$. \textbf{d}, Ramsey fringes as a function of initial molecule Fock state number $M$. \textbf{e}, Simulated pairing gate infidelity as a function of molecule Fock state number $M$. See Fig.~\ref{fig:ED}f for the circuit used to determine the infidelity.}
    \label{fig:5}
\end{figure}

\section*{Implementation in neutral atoms}
 
Finally, we describe how to implement our framework in neutral atoms, which requires the implementation of the universal qubit-fermion gateset shown in Fig.~\ref{fig:1}. $\mathrm{CZ}^{\mathrm{qf}}$ and $\mathrm{CZ}^{\mathrm{f}}$ are implemented using a high-lying Rydberg state of the atom, similarly to the routinely implemented qubit-qubit $\mathrm{CZ}^{\mathrm{q}}$~\cite{Jaksch2000,GonzlezCuadra2023}. $\sqrt{\mathrm{iSWAP}}^\mathrm{f}$ is implemented by bringing two tweezers close to each other, inducing tunneling of atoms between the sites~\cite{GonzlezCuadra2023} or by placing the atoms into an optical lattice~\cite{gyger2024} and controllably performing tunneling by using superlattices~\cite{zhu2025,Chalopin2024}. $\mathrm{T^\mathrm{f}}$, $\mathrm{S^\mathrm{f}}$, $\mathrm{Z^\mathrm{f}}$ are implemented by detuning the trapping frequency of the target tweezer or optical lattice site. What is left to show is how to implement $\mathrm{BRAID}$. We propose its realisation as a product $\mathrm{BRAID}_{ij}=\mathrm{PAIR'}_{ij} \mathrm{\sqrt{iSWAP}}^\mathrm{f}_{ij}$, where $\mathrm{PAIR'}_{ij}$ is the pairing gate \begin{equation}
    \mathrm{PAIR'}_{ij}=\exp\left(i\frac{\pi}{4} \left(p_i^\dagger p_j^\dagger +h.c.\right) \right). \label{eq:pairtarget}
\end{equation}

To implement pairing, we propose the pulse-sequence in Fig.~\ref{fig:5}a.  It employs photodissociation of homonuclear diatomic bosonic molecules consisting of the same atomic species as the fermions~\cite{Chin2004,Jiang2011}, see Fig.~\ref{fig:5}b. Photodissociation realizes the boson-fermion gate $U_\mathrm{diss} =\exp\left(-i \frac{\pi}{4\sqrt{M}}  \left(b_i p^\dagger_{i\uparrow}  p^\dagger_{i\downarrow} + b^\dagger_i p_{i\downarrow}  p_{i\uparrow}\right)\right)$,
where $b$ is the bosonic annihilation operator of molecules (we assume there are $M$ of them) residing in the ground state of the tweezer and obeying commutation relations $[b_i,b_j^\dagger]=\delta_{ij}$. $\uparrow$ and $\downarrow$ label two internal states of the atom, where so far we assumed that all atoms reside in $\downarrow$, i.e.~$p_i=p_{i\downarrow}$. In order to convert this same-site gate into a non-local gate, we conjugate it with a fermion spin-selective tunneling $U_{\mathrm{t}\downarrow}(\theta)=\exp\left( -i \theta \left(p^\dagger_{i\uparrow} p_{j\uparrow} +h.c.\right)\right)$ and a fermionic spin-flip $U_{\uparrow\downarrow}(\theta)=\exp\left( -i \theta \left(p^\dagger_{i\downarrow} p_{i\uparrow} +h.c.\right)\right)$. To suppress interactions between the molecules and the fermions, the molecules reside in a separate tweezer when not in use for gates and are tunneled into the target tweezer by moving the two tweezers close to each other, implementing a beam-splitter operation $\mathrm{BS}(\theta)=\exp\left(\theta \left(b_i^\dagger b_j -h.c.\right)\right)$. Altogether, the sequence in Fig.~\ref{fig:5}a implements
\begin{equation}
    \mathrm{PAIR}_{ij}=\exp\left(i\frac{\pi}{4\sqrt{M}} \left(b_k p_i^\dagger p_j^\dagger +h.c.\right) \right).\label{eq:pair}
\end{equation}

We now discuss how we use Eq.~\eqref{eq:pair} to effectively implement Eq.~\eqref{eq:pairtarget}. If the molecules were in a coherent state with large amplitude $\sqrt{M}$, we could replace $b\rightarrow\braket{b}=\sqrt{M}$ and the two gates would be equal. However, in the case of atoms trapped in dipole traps, Fock states or mixtures thereof are more naturally prepared than coherent states. To overcome this conceptual challenge, we in the following use the fact that large-$M$ Fock states cannot be distinguished from coherent states by measurements~\cite{dowling2006}. 

To show that we can indeed effectively implement Eq.~\eqref{eq:pairtarget} employing a large-$M$ Fock state of molecules, consider the thought experiment shown in  Fig.~\ref{fig:5}c. The first pairing gate is supposed to create a ``fermionic Bell-pair'' $(\ket{00}+i\ket{11})/\sqrt{2}$. To probe whether there is coherence, a phase gate with phase $\theta$ is applied and a second $\mathrm{PAIR}$ gate rotates the measurement basis into the effective Bell basis. This realizes a Ramsey-type experiment. Simulating this experiment numerically, we find Ramsey fringes with a contrast that approaches unity as $M\rightarrow \infty$. This shows that, after the first pairing gate, there was indeed effective coherence between states $\ket{00}$ and $\ket{11}$. An intuitive way to understand this observation is that the molecule number distribution in each mode after the  beamsplitter is approximately Poissonian---just like in a coherent state. The entanglement between both modes then leads to an effective phase coherence between the modes. This effective coherence is then revealed by the second pulse, which disentangles the fermions from the molecules. We say ``effective'' coherence because the reduced density matrix of the fermions is always diagonal; indeed, by measuring the total number of molecules, one projects the fermions onto $\ket{00}$ or $\ket{11}$. Therefore, it is important that the molecules are not measured while the fermions are entangled with them. In Fig.~\ref{fig:ED}, we additionally perform state tomography numerically, showing that the state after the first pairing gate is indeed effectively a Bell pair.

We probe the gate fidelity of our effective pairing gate in Fig.~\ref{fig:5}e using Choi state tomography, which we generalize to fermionic gates, see circuit shown in Fig.~\ref{fig:ED}. We find that only $100$ molecules suffice to create a gate with $>99.8\,\%$ fidelity. In the Methods, we also show that even a mixed state with a Poissonian number distribution can be used instead of $\ket{M}$.  Non-trivially, Fig.~\ref{fig:5}c and Fig.~\ref{fig:5}d show that our scheme works even if the second pairing gate is not performed on the same molecule site, which enables full parallelizability of pairing operations. For that to be the case, the molecule tweezers need to be effectively coherent with respect to each other, just like lasers manipulating different qubits within a quantum computer need to be phase-locked. The cross-coherence of molecule tweezers is guaranteed by creating them from one common Fock state with a beamsplitter as done in Fig.~\ref{fig:5}c. To parallelize this further, many molecule tweezers can be created by repeatedly applying beamsplitters starting from one large Fock state, for example by tunneling from a single large dipole trap. While our proposal is challenging to implement, coherent dissociation of Feshbach molecules is routinely performed in cold-atom experiments~\cite{Chin2004}.

\section*{Outlook}
 

Our work opens the field of fermion-qubit error correction and fault-tolerant algorithms. Of prime importance is the error threshold of our codes for the strongly biased errors present in experiments. Moreover, fermionic versions of quantum low-density-parity check codes may be developed to enable a finite rate of logical fermions per codeblock. Due to our reliance on transversal gates, algorithmic fault tolerance can be explored within our approach~\cite{zhou2024}. 


We have so far mainly used the presence of qubits as a helper to implement logical operations. In high-energy physics, the possibility to directly simulate qubit-fermion models enables simulation of fermions coupled to large-spin degrees of freedom in lattice gauge theories~\cite{GonzlezCuadra2023, BOM}. For this purpose, qudits could be directly implemented with large-spin atomic species~\cite{Gorshkov2010,Zache2023}, rotational levels of molecules~\cite{Albert2010}, or an oscillator mode~\cite{Gottesman2001}. Relatedly, the depth-efficiency of our approach might enable simulation of quantum gravity models~\cite{xu2020}.

Boson-fermion models could be studied by combining our schemes with oscillator-qubit quantum simulation~\cite{LGT, BOM} using either motional modes of neutral atoms in tweezers~\cite{Scholl2023, BOM} or bosonic atoms. Suitable boson-to-boson error correction schemes would need to be found~\cite{Lloyd1998,Braunstein1998}. This way, the longstanding problems of the influence of phonon-electron interactions on thermal, structural, and vibrational material properties~\cite{Baroni2001} as well as the interplay of molecular vibrations with the electronic wavefunction in chemistry~\cite{Agostini2022} could be investigated using potentially vastly shallower circuits than with qubit-only platforms. 

Our architecture might offer an advantageous approach not only for simulating fermions but also for quantum computations that are natively qubit-only. We encoded logical fermion operators in odd-weight fermion operators---instead, logical qubits can be encoded in even-weight logical operators, for example Majorana surface codes which have been shown to have high thresholds~\cite{Vijay2015}. One advantage of this approach is that lost atoms are replenished by  coupling to the molecule bath, offering an alternative to continuous atom reloading~\cite{gyger2024}, which is one of the challenging aspects of neutral-atom quantum computing. 

Finally, motivated by our depth improvement of the qubit algorithm for materials, development of fermion-hardware-inspired qubit algorithms might lead to further improvements in qubit algorithms for fermionic simulation.

\section*{Acknowledgments}
We acknowledge discussions with Ignacio Cirac, Daniel Gonzalez-Cuadra, Christopher Kang, Paul Julienne, Robert Ott, Shraddha Singh, Maryam Mudassar, Nathan Wiebe, Torsten Zache, Johannes Zeiher, and Hengyun Zhou.
 
\textbf{Funding.}  This material is based upon work supported in part by the U.S.\ Department of Energy, Office of Science, National Quantum Information Science Research Centers, Quantum Systems Accelerator. Additional support is acknowledged from the following agencies. This work was supported in part by the  Department of Energy (DOE), Office of Science, Office of Nuclear Physics, via the program on Quantum Horizons: QIS Research and Innovation for Nuclear Science (award no. DE-SC0023710) and the National Science Foundation's Quantum Leap Challenge Institute in Robust Quantum Simulation (award no. OMA-2120757). A.V.G.~was also supported in part by the DoE ASCR Accelerated Research in Quantum Computing program (award No.~DE-SC0020312), AFOSR MURI, DoE ASCR Quantum Testbed Pathfinder program (awards No.~DE-SC0019040 and No.~DE-SC0024220), NSF STAQ program, and DARPA SAVaNT ADVENT. E.C.~was also supported in part by ARO MURI (award No.~SCON-00005095), and DoE (BNL contract No.~433702). Support is acknowledged from the U.S.\ Department of Energy, Office of Science, National Quantum Information Science Research Centers, Co-design Center for Quantum Advantage (C2QA) under contract number DE-SC0012704. C2QA participated in this research.   

\textbf{Competing interests.}  All authors declare no competing interests.

\bibliography{bib}
\clearpage

\setcounter{thm}{0}
\setcounter{prop}{0}
\setcounter{figure}{0}
\renewcommand{\thefigure}{S\arabic{figure}}
\setcounter{table}{0}
\renewcommand{\thetable}{S\arabic{table}}
\setcounter{equation}{0}
\renewcommand{\theequation}{S\arabic{equation}}
\begin{center}
    \Large{Methods}
\end{center}
\normalsize

\section*{Proofs}
\subsection*{Proof of Theorem}

In this section, we present the details behind the proof of the Theorem.

We first discuss some preliminaries.
Error correction and detection works by asking whether the computation is still in the code space or not. For an error to be detectable, it must therefore either act in the same way on all code states or make the state leave the codespace. In the case where an the error acted differently on the code states, the detection would conclude that there is no error (because the computation is still in the code space) and therefore that error is not detectable. The Knill-Laflamme condition~\cite{Knill1997,Eastin2009} formulates this condition mathematically. It ensures that errors do not send codestates to states that have an overlap with other codestates. If this condition were not fulfilled, then the detection of an error could lead to a projection onto the wrong codestate, which is an uncorrectable logical error. Formally, the condition states that, for an error operator $E$ to be detectable, $PEP \propto P$ needs to be fulfilled, where $P$ is the projector on the codespace, i.e.~the space spanned by the logical codestates $\ket{0}_\mathrm{L}$ and $\ket{1}_\mathrm{L}$,  representing, respectively, the absence and presence of a logical fermion. 
\begin{lem}
 Consider an error-detecting code which detects phase errors $E_j=n_j$. If this code uses codestates that are eigenstates of the total number operator in a single codeblock, $N_\mathrm{tot}=\sum_j n_j$, then all code states must have the same number eigenvalue.
\end{lem}
\begin{proof}
    The Knill-Laflamme condition for phase errors $E=n_j$ implies $P N_\mathrm{tot} P \propto P$. Furthermore, having codestates that are number eigenstates implies $(1-P)N_\mathrm{tot} P =0$. Taking these two conditions together, we find $N_\mathrm{tot} P \propto P$. This implies that all codestates must have the same number eigenvalue.
\end{proof}
\begin{def*} Fermionic logical operators are operators that anti-commute between two codeblocks.
\end{def*}
We note that this definition does not exclude fermionic logical operators that anti-commute within one codeblock~\cite{McLauchlan2024}. However, our definition is more restrictive - logical operators that anti-commute within one codeblock may commute between two codeblocks. This is for instance the case when encoding two anti-commuting logical operators in a qubit surface code~\cite{Landahl2023}.
\begin{lem}
     Fermionic logical operators have odd weight in physical Majorana operators.
\end{lem}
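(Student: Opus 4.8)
The plan is to reduce everything to the single combinatorial fact already built into the setup: any two distinct physical Majorana operators anticommute, since $\{\gamma_i,\gamma_j\}=2\delta_{ij}$, $\{\tilde\gamma_i,\tilde\gamma_j\}=2\delta_{ij}$, and $\{\gamma_i,\tilde\gamma_j\}=0$. First I would observe that any operator supported on a codeblock can be expanded in the basis of Majorana monomials (ordered products of distinct $\gamma$'s and $\tilde\gamma$'s restricted to that block), and that the relevant notion of weight is the number of Majorana factors in each such monomial. The goal is then to show that the defining property of a fermionic logical operator --- anticommutation between two disjoint codeblocks $a$ and $b$ --- is incompatible with any even-weight contribution.

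The core computation I would carry out is the exchange sign of two monomials living on disjoint blocks. If $M$ is a monomial of weight $w$ on block $a$ and $M'$ a monomial of weight $w'$ on block $b$, then since every factor of $M$ anticommutes with every factor of $M'$, sliding one past the other gives $MM'=(-1)^{ww'}M'M$. For a logical operator realized as a single monomial of weight $w$ on each structurally identical block, the requirement $\{L_a,L_b\}=0$ becomes $(-1)^{w^2}=(-1)^{w}=-1$, which forces $w$ to be odd. This already settles the case of the parity logical operators $\gamma^\mathrm{L}=\prod_i\gamma_i$ and $\tilde\gamma^\mathrm{L}=\prod_i\tilde\gamma_i$, whose weight $N$ must therefore be odd.

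To cover a general logical operator, which may be a superposition of monomials of different weights, I would split $L_a=E_a+O_a$ into its even- and odd-weight parts, and likewise $L_b=E_b+O_b$. Using that even operators commute with everything on a disjoint support while two odd operators anticommute, the anticommutator collapses to $\{L_a,L_b\}=2\left(E_aE_b+E_aO_b+O_aE_b\right)$. Viewing operators on the disjoint union $a\cup b$ through the $\mathbb{Z}_2$-graded tensor structure of the fermionic algebra, the three surviving terms sit in distinct (parity, parity) sectors and must vanish separately; since $L_b$ is a nontrivial logical operator, the conditions $E_aE_b=0$ and $E_aO_b=0$ force $E_a=0$, and then $O_aE_b=0$ forces $E_b=0$. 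Hence a genuine fermionic logical operator contains only odd-weight monomials, i.e.\ has odd weight, which is exactly the claim.

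I expect the main obstacle to be this last step rather than the weight count: one must argue cleanly that the even part of the operator cannot be hidden by cancellations. The safe route is to invoke the graded (super)tensor decomposition of the Majorana algebra over disjoint supports, which guarantees that the (even,even), (even,odd), and (odd,even) components are linearly independent and therefore vanish independently. Establishing this grading carefully, and confirming that ``the same'' logical operator on the two blocks indeed shares an identical weight decomposition, is the only nontrivial bookkeeping the proof requires.
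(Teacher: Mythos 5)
Your proof is correct, and its core coincides with the paper's: the paper likewise writes the logical operators on the two blocks as products of Majorana operators and invokes the exchange-sign identity $O_aO_b=(-1)^{|A|\cdot|B|+|A\cap B|}O_bO_a$, specialized to disjoint supports ($|A\cap B|=0$), so that anticommutation forces both weights to be odd --- exactly your monomial computation $MM'=(-1)^{ww'}M'M$. Where you go beyond the paper is the second half: the paper's proof stops at monomials, i.e.~it implicitly assumes each logical operator is a single Majorana string, which is natural in the stabilizer setting but is not demanded by the paper's Definition (any operator anticommuting between codeblocks). Your decomposition $L=E+O$ into even and odd parts, combined with the $\mathbb{Z}_2$-graded structure of the Majorana algebra over disjoint supports, closes that gap: it shows a general logical operator can have no even-weight component at all, so ``odd weight'' holds monomial-by-monomial. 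The bookkeeping you flag as the main obstacle does go through: products $MM'$ of monomials from the two blocks are linearly independent, so the (even,even), (even,odd), and (odd,even) sectors vanish separately, and $E_aE_b=E_aO_b=0$ with $L_b\neq 0$ indeed forces $E_a=0$; moreover, your worry about the two blocks ``sharing an identical weight decomposition'' is moot in this general version, since $L_a$ and $L_b$ are treated as independent operators. What each approach buys: the paper's argument is shorter and suffices for the stabilizer codes it actually constructs, while yours proves the lemma at the generality at which it is stated, and is in the same spirit as the decomposition reasoning the paper itself deploys later in the non-stabilizer half of the Theorem's proof, where projectors are expanded into even-weight Majorana strings.
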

\begin{proof}
    First, we define new Majorana operators $\eta$ such that $\eta_{2i}=\gamma_i$ and $\eta_{2i+1}=\tilde \gamma_i$. We then write the logical operators $O_a$ and $O_b$ on two codeblocks as a product of $\eta$ Majorana operators. Two such products of Majorana operators   fulfill~\cite{Bravyi2010} \begin{equation}
        O_a O_b = (-1)^{|A|\cdot |B|+|A\cap B|} O_b O_a,\label{eq:maj_comm}
    \end{equation}
    where $A$ and $B$ are the supports on which $O_a$ and $O_b$ are defined (i.e.~on the lattice on which the $\eta$ operators are defined). 
    Because $O_a$ and $O_b$ are defined on two codeblocks, their support does not overlap,  $|A\cap B|=0$. Therefore, they can only fulfill fermionic anti-commutation relations if $|A|$ and $|B|$ are odd, i.e. if the two operators have odd weight.
\end{proof}
\begin{thm*} An error correcting code using number eigenstates as codestates does not have fermionic logical operators.
\end{thm*}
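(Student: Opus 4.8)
The plan is to reach a contradiction by feeding the two Lemmas above into a single parity argument. Suppose, toward contradiction, that a code whose codestates are eigenstates of $N_\mathrm{tot}$ nonetheless possesses a fermionic logical operator $O$ on the codeblock. I would first record the consequence of the first Lemma: since all codestates share the same $N_\mathrm{tot}$ eigenvalue, they also share the same total-parity eigenvalue, so the code projector $P$ projects onto a single eigenspace of the total fermion parity $P_f=\prod_i(i\gamma_i\tilde\gamma_i)=(-1)^{N_\mathrm{tot}}$. Concretely this gives $P_f P = P P_f = \lambda P$ for a common sign $\lambda=\pm1$, together with $P_f^2=1$.

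Next I would invoke the second Lemma, which guarantees that $O$ has odd weight in the physical Majorana operators $\gamma_i,\tilde\gamma_i$. The elementary fact I need is that every single Majorana operator anticommutes with $P_f$, since it flips the occupation of one mode and hence flips the total fermion parity. An odd-weight Majorana monomial—and therefore $O$—then anticommutes with the total parity, $P_f O = -O P_f$.

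Finally I combine the two ingredients. Inserting $P_f^2=1$ and pushing the parities onto the projectors gives
\begin{equation}
P O P = P P_f\,(P_f O)\,P = -\,P P_f\, O\, P_f P = -\lambda^2\, P O P = -\,P O P,
\end{equation}
so $POP=0$. But a genuine logical operator must act nontrivially within the codespace, i.e.\ $POP\neq0$ (for the fermionic operators of interest, the logical fermion $c^\dagger$ has the nonzero matrix element $\bra{1}_\mathrm{L}\,c^\dagger\,\ket{0}_\mathrm{L}$). The vanishing of $POP$ says instead that $O$ maps the entire codespace into the orthogonal, opposite-parity sector, so it cannot implement any logical action, contradicting the assumption. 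Since the codestates were number eigenstates, the first Lemma is exactly what confined $P$ to a single parity sector, and the second Lemma is exactly what made $O$ parity-odd. I expect the only delicate points to be the bookkeeping that ties ``odd Majorana weight'' to ``anticommutes with $P_f$'' and the insistence that a logical operator possess a nonvanishing diagonal block $POP$; once those are pinned down, the sign-tracking in the displayed equation collapses the argument immediately.
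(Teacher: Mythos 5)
Your argument is sound as far as it goes, and it is genuinely different from the bulk of the paper's proof. What you have done is promote the paper's \emph{first} argument (the stabilizer-code case: by Lemma 1 the total parity $\exp(i\pi N_\mathrm{tot})$ is a stabilizer, so no odd-weight logical operator can commute with the stabilizer group) to the projector level: $P_fP=PP_f=\lambda P$ together with $P_fO=-OP_f$ immediately gives $POP=-POP=0$, contradicting codespace preservation $(1-P)OP=0$ plus nontrivial logical action. The parity identity $(-1)^{N_\mathrm{tot}}=\prod_i(i\gamma_i\tilde\gamma_i)$, the sign bookkeeping, and the requirement $POP\neq 0$ for a logical fermion (e.g.\ $\bra{1}_\mathrm{L}c^\dagger\ket{0}_\mathrm{L}\neq 0$) are all correct. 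This buys you a single, uniform argument that does not care whether the code is a stabilizer code, which is cleaner than the paper's two-step presentation.

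The caveat is that your proof establishes the theorem under a stronger hypothesis than the paper's general argument needs. Your only access to ``$O$ is parity-odd'' is Lemma 2, and Lemma 2 is proved for Majorana monomials that anticommute between codeblocks \emph{as operators on the full Hilbert space}. The paper's second argument instead assumes only the physically relevant condition $\lbrace O^\mathrm{L}_a,O^\mathrm{L}_b\rbrace P_aP_b=0$, i.e.\ anticommutation after projection onto the codespace; under that weaker hypothesis odd weight is simply not available (a priori an operator with even- or mixed-weight Majorana expansion could still act fermionically once projected), so your parity flip $P_fO=-OP_f$ cannot be asserted. The paper closes this loophole without ever invoking odd weight: it shows that $P_a$ and $P_aO^\mathrm{L}_aP_a$ commute with parity, hence have purely even-weight Majorana expansions, hence commute with everything on block $b$, which yields $[O^\mathrm{L}_a,O^\mathrm{L}_b]P_aP_b=0$ unconditionally and rules out codespace-level anticommutation for any nontrivially acting pair. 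So: your proof is a correct and more economical derivation for operators that anticommute at the operator level (the literal reading of the Definition), while the paper's longer argument is what secures the stronger statement; to match it, your parity argument would have to be run on the projected operators $P_aO^\mathrm{L}_aP_a$ along the lines the paper uses, not on $O$ itself.
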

\begin{proof} First, consider stabilizer codes using number eigenstates. In this case, by Lemma 1,  the total fermion parity $\pm\exp(i\pi N_\mathrm{tot})$ is a stabilizer because all the codestates have the same number eigenvalue, which is an integer. As noted in proposition 6.1 in Ref.~\cite{Bravyi2010}, this implies that there cannot be an odd-weight logical operator. To repeat the argument in Ref.~\cite{Bravyi2010}, this is because all logical operators need to commute with the stabilizers and hence the total fermion parity. This is only possible if they are even weight in Majorana operators, which means they are not fermionic logical operators by Lemma 2. 

More generally, i.e. not assuming a stabilizer code,  consider two logical operators $O^\mathrm{L}_a$ and $O^\mathrm{L}_b$ acting on two codeblocks $a$ and $b$. For them to be logical fermionic operators, they need to fulfill anti-commutation relations when acting on the codespace, i.e. $\lbrace O^\mathrm{L}_a,O^\mathrm{L}_b\rbrace P_aP_b=0$, where $P_a$ and $P_b$ are the projectors on the code spaces of $a$ and $b$. To see whether this is the case, consider $O^\mathrm{L}_aO^\mathrm{L}_bP_aP_b$. Because of Lemma 1, the total fermion parity in codeblock $a$ commutes with $P_a$. Hence, the operator decomposition of $P_a$ in terms of Majorana operators must only contain operator strings with even weight. Eq.~\eqref{eq:maj_comm} then implies that $P_a$ commutes with any fermion operator in codeblock $b$, in particular the logical ones,  $O^\mathrm{L}_aO^\mathrm{L}_bP_aP_b=O^\mathrm{L}_aP_aO^\mathrm{L}_bP_b$. Note that for $O^\mathrm{L}_a$ to be a logical operator,  $(1-P_a)O^\mathrm{L}_aP_a=0$ must hold~\cite{Eastin2009}, implying $O^\mathrm{L}_aP_a=P_aO^\mathrm{L}_aP_a$. Because $P_aO^\mathrm{L}_aP_a$ only acts on the code space, it commutes with the fermion parity. Hence, its operator decomposition only contains even-weight operators and therefore commutes with any operator in codeblock $b$. This implies $O^\mathrm{L}_aP_aO^\mathrm{L}_bP_b=O^\mathrm{L}_bO^\mathrm{L}_aP_aP_b$. Hence, $[O^\mathrm{L}_a,O^\mathrm{L}_b]P_aP_b=0$ and logical operators are not fermionic.
\end{proof} 

\subsection*{Proof of absence of T-gate advantage}

In this subsection, we present a proposition proving that there is no T-gate advantage.

\begin{prop}
    Fermion-to-fermion error correcting codes with transversal \braid, \czf, \Sf gates and non-transversal \Tf gate do not have a lower minimal \Tf-gate count than the T-gate count of the equivalent qubit circuit. 
\end{prop}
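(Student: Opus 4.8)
The plan is to exhibit a faithful Jordan–Wigner dictionary between the fermionic computation and an equivalent qubit computation, and to show that this dictionary converts the fermionic Clifford gates into qubit Clifford gates while turning each \Tf\ into exactly one qubit $\mathrm{T}$ gate. Since the minimal $\mathrm{T}$-count of a unitary is an implementation-independent invariant, the existence of such a count-preserving translation immediately bounds the minimal qubit $\mathrm{T}$-count from above by the minimal fermionic \Tf-count, which is precisely the inequality asserted.

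Concretely, I would first fix the Jordan–Wigner isomorphism $\Phi$ between the fermionic Fock space and the qubit Hilbert space, under which $n_i\mapsto (1-Z_i)/2$ and the Majorana operators $\gamma_i,\tilde\gamma_i$ map to Pauli strings dressed by a nonlocal $Z$-string. The key step is then a gate-by-gate evaluation. For \czf\ one has $\exp(i\pi n_i n_j)\mapsto \mathrm{CZ}$; for \Sf, i.e.\ $\exp(i(\pi/2)n_i)$, one obtains up to a global phase a single-qubit $\mathrm{S}$ gate; and for \braid, which equals $\exp(-\tfrac{\pi}{4}\tilde\gamma_i\gamma_j)=\exp(i\tfrac{\pi}{4}M)$ with $M=i\tilde\gamma_i\gamma_j$ Hermitian and $M^2=\mathds{1}$, the image is a $\pi/4$ rotation about a Pauli string and hence Clifford. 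All three therefore contribute no $\mathrm{T}$ gates. By contrast, $\mathrm{T}^\mathrm{f}=\exp(i(\pi/4)n_i)\mapsto \exp(-i(\pi/8)Z_i)$ up to a global phase, which is exactly one qubit $\mathrm{T}$ gate in an appropriate basis. The same computation applies at the logical level, since the logical Majorana operators $\gamma^\mathrm{L},\tilde\gamma^\mathrm{L}$ obey the Majorana algebra, the transversal logical gates act as the corresponding logical Clifford gates, and \Tf\ remains the sole non-Clifford gate.

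With the dictionary established, I would run the counting argument. Let $U$ be the fermionic logical unitary and suppose it admits an implementation using $k$ \Tf\ gates together with arbitrarily many \braid, \czf, and \Sf\ gates. Applying $\Phi$ term by term yields a qubit circuit for $\Phi U\Phi^{-1}$ in which every fermionic Clifford becomes a qubit Clifford and each of the $k$ \Tf\ gates becomes a single $\mathrm{T}$ gate, so the resulting qubit circuit uses exactly $k$ $\mathrm{T}$ gates. Hence the minimal qubit $\mathrm{T}$-count of $\Phi U\Phi^{-1}$ is at most $k$. Taking $k$ to be the minimal fermionic \Tf-count, we conclude that the minimal \Tf-count is at least the minimal qubit $\mathrm{T}$-count of the equivalent unitary, so there is no \Tf-advantage.

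The main obstacle is the verification rather than the counting. I expect the delicate points to be (i) confirming that the nonlocal $Z$-strings introduced by Jordan–Wigner do not spoil the Clifford property of \braid, so that $M=i\tilde\gamma_i\gamma_j$ maps to a genuine Pauli string and $\exp(i(\pi/4)M)$ stays a Clifford $\pi/4$-rotation rather than a Gaussian operation at a non-Clifford angle; and (ii) arguing that the \Tf$\leftrightarrow\mathrm{T}$ correspondence is tight, including the consistent behaviour of same-mode repetitions, where $(\mathrm{T}^\mathrm{f})^2=\mathrm{S}^\mathrm{f}$ maps to $\mathrm{T}^2=\mathrm{S}$ so that both pictures collapse non-Clifford resources identically. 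Once these are settled, the inequality follows with no further work.
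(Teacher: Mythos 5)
Your proposal is correct and follows essentially the same route as the paper: a Jordan--Wigner translation of the fermionic circuit under which \czf, \Sf, and \braid\ become qubit Cliffords (the $\pi/4$ rotation about the Pauli string that $i\tilde\gamma_i\gamma_j$ maps to remains Clifford despite the $Z$-strings) while each \Tf\ becomes exactly one qubit $\mathrm{T}$ gate, so the translated circuit certifies that the minimal qubit $\mathrm{T}$-count is at most the minimal \Tf-count. Your write-up is somewhat more explicit than the paper's about the direction of the inequality and the invariance of minimal $\mathrm{T}$-count, but the underlying argument is identical.
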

\begin{proof}
    We show this by converting the fermionic circuit to a qubit circuit using the Jordan-Wigner encoding. The $\mathrm{CZ}^\mathrm{f}$ gate maps to a qubit CZ, which can be compiled to qubit Clifford gates, whereas $\mathrm{S}^\mathrm{f}$ and $\mathrm{T}^\mathrm{f}$ gates directly map to S and T gates, respectively. \braid\,gates map to multi-qubit gates, but their compilation is possible using only Clifford gates. Hence, it is always possible to construct a qubit circuit with exactly the same number of T gates as the fermion circuit.
\end{proof}

\subsection*{Proof of universality of the discrete qubit-fermion gate set}

In this subsection, we show the universality of our discrete qubit-fermion gate set.

\begin{prop}
      \braid, \czf, \czqf, $\mathrm{H}$, $\mathrm{S}$ and T (or \Tf), are universal for qubit-fermion quantum computation.
\end{prop}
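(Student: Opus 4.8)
The plan is to reduce the claim to the standard fact that the Clifford group together with a single non-Clifford gate densely generates the unitary group, transported through the Jordan-Wigner (JW) encoding. First I would fix the notion of universality appropriate to the hybrid setting: because the fermionic generators \braid, \czf, \Sf, \Tf are all even in Majorana operators, they respect the fermion-parity superselection rule, so ``universal'' should mean dense generation of the group of parity-preserving unitaries on the joint qubit-fermion Hilbert space, with the qubit factors unconstrained and \czqf respecting parity since it depends only on the number operator. The JW transform identifies the $N$-mode fermionic Fock space with an $N$-qubit register, turning the whole system into a pure multi-qubit system and the target group into the subgroup of the qubit unitary group cut out by the parity constraint.

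Second, I would track each generator under JW. As already recorded in the proof of Proposition~1, \czf maps to a qubit \cz, while \Sf and \Tf map to the single-qubit $\mathrm{S}$ and $\mathrm{T}$ gates on the corresponding mode; \braid, being a quadratic two-Majorana rotation, maps to a multi-qubit Clifford gate; and \czqf maps to a qubit-controlled operation that compiles to Clifford gates. Together with the native qubit $\mathrm{H}$ and $\mathrm{S}$ gates, every generator except the chosen $\mathrm{T}$ (or \Tf) lands in the qubit Clifford group.

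Third, I would assemble the full Clifford group. The main text already states that \braid, \czqf, \czf, \Sf together with qubit $\mathrm{H}$ and $\mathrm{S}$ generate the qubit-fermion Clifford group, so the only gap is obtaining \Sf from the listed generators. If \Tf is available this is immediate, since $\mathrm{S}^\mathrm{f}=(\mathrm{T}^\mathrm{f})^2$. If instead only the qubit $\mathrm{T}$ gate is available, I would first synthesize \Tf by transferring the qubit $\mathrm{T}$ onto a fermion with the qubit-fermion gate, following the gadget of Fig.~\ref{fig:3}h---the same construction with $\mathrm{S}$ in place of $\mathrm{T}$ yields \Sf directly---and then use $\mathrm{S}^\mathrm{f}=(\mathrm{T}^\mathrm{f})^2$. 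Either way the generated Clifford group is complete.

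Finally, I would add back the non-Clifford gate. Clifford-plus-$\mathrm{T}$ universality for qubits is a standard result: the Clifford group is a maximal finite subgroup of the projective unitary group, so adjoining any element outside it---such as $\mathrm{T}$, or equivalently the JW image of \Tf---yields a group that is dense in the unitary group. Pulling this back through JW gives density in the parity-preserving fermion-qubit sector, which is exactly the desired universality. The step I expect to be the main obstacle is the third one: verifying rigorously that the JW images of \braid and \czqf, together with $\mathrm{H}$, $\mathrm{S}$ and \czf, generate the \emph{entire} qubit Clifford group on the joint register rather than a proper subgroup, and checking that the operators excluded by the parity superselection rule---the odd-weight Majorana strings---are precisely those that are not needed, so that completeness within the parity-even sector is genuinely attained.
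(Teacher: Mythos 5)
There is a genuine gap, and it sits exactly where you flagged it. Your argument funnels everything through the claim that the Jordan--Wigner images of \braid, \czf, \czqf, \Sf, together with $\mathrm{H}$ and $\mathrm{S}$, give you the qubit Clifford group, so that the standard ``Clifford is a maximal finite subgroup, hence Clifford plus any non-Clifford element is dense'' theorem applies. But that claim is not just unverified---it is false as stated: every fermionic generator in the list is even in Majorana operators, so every JW image preserves fermion parity, and the generated group can only ever be a \emph{proper} subgroup of the Clifford group (e.g.\ a bare $\mathrm{X}$ on a fermion-mode qubit, whose JW preimage is a parity-odd Majorana string, is unreachable). Consequently the maximal-finite-subgroup theorem cannot be invoked; what you would actually need is density of (parity-preserving Cliffords $+$ $\mathrm{T}$) inside the parity-preserving unitary group, and that statement \emph{is} essentially the fermionic universality result of Bravyi et al.~\cite{Bravyi2010}---so your route either has to cite that result (making the JW detour unnecessary) or re-prove it, which your proposal does not do.

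The paper takes the opposite reduction and avoids this issue entirely: it \emph{assumes} the known result that \braid, \czf, \Tf\ are universal for purely fermionic computation~\cite{Bravyi2010}, and spends its effort on the genuinely new ingredient, namely the qubit--fermion interface. Concretely, it encodes each qubit into a dual-rail pair of fermionic modes, swaps the qubit state into that encoding, runs the whole computation fermionically, and swaps back; the nontrivial content is compiling the qubit$\leftrightarrow$dual-rail SWAP from the stated gate set (one CNOT direction is $\mathrm{H}\,\mathrm{CZ}^{\mathrm{qf}}\,\mathrm{H}$, the other is $e^{i\frac{\pi}{4}(1-X^{\mathrm{DR}})(1-Z)}$ decomposed via qubit-controlled hopping, Fig.~\ref{fig:4}b), plus the Fig.~\ref{fig:3}h gadget converting a qubit $\mathrm{T}$ into \Tf. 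Your proposal contains no counterpart to this constructive interface step; it is subsumed in the same unproven density claim. To repair your argument, replace step four by a citation to (or proof of) parity-sector universality of the fermionic Clifford group plus \Tf, and add an explicit mechanism---like the paper's dual-rail SWAP---showing the physical qubits can be coupled into that universal fermionic machine.
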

\begin{proof}
We would like to do universal qubit-fermion quantum computation on a system with $L_\mathrm{q}$ qubits and $L_\mathrm{f}$ fermions. To do this, we introduce $2L_\mathrm{q}$ fermionic ancillas and encode $L_q$ qubits into dual-rail fermion qubits, i.e.~$\ket{0}\rightarrow \ket{01}$, $\ket{1}\rightarrow \ket{10}$. We then swap the initial qubit state into those dual-rail fermion qubits (we discuss how to do so below). We then perform the qubit-fermion circuit entirely using the universal fermionic gate set \braid, \czf, and \Tf ~\cite{Bravyi2010} on the $L_\mathrm{f}+2L_\mathrm{q}$ fermions (where \Tf gates are implemented using additional qubit ancillas and T,  H and \czqf gates as shown in Fig.~\ref{fig:3}h). At the end of the circuit, we swap the information back from the fermionic ancillas into the qubits.

It remains to show how to implement dual-rail-qubit SWAP gates. A SWAP gate is decomposed into three CNOT gates of two types, one where the qubit is the control and one where the dual-rail qubit is the control. If the dual-rail qubit is the control, a CNOT is simply $\mathrm{H} \mathrm{CZ}^\mathrm{qf} \mathrm{H}$, where the $\mathrm{CZ}^\mathrm{qf}$ is acting on the first fermion site (because it is in $\ket{1}$ when the dual-rail qubit is in $\ket{1}$). If the qubit is the control, we decompose the CNOT as
\begin{equation}
    e^{i\frac{\pi}{4}\left(1-X^\mathrm{DR}\right)\left(1-Z\right)}.
\end{equation}
Noting that $X^\mathrm{DR}=c^\dagger_1 c_2 +c^\dagger_2 c_1$, we can decompose this gate into a qubit phase gate, a $\pi/4$ hopping gate, and a $\pi/4$ qubit-controlled hopping gate, whose compilations are shown in Fig.~\ref{fig:4}b in terms of the claimed gate set.
\end{proof}

\section*{Compilation of logical operations}

We show that the circuits shown in Fig.~\ref{fig:2} and Fig.~\ref{fig:3} indeed implement the target logical operations.

\subsection*{Useful identities for fermion gates}
In this subsection, we provide useful identities for proving circuit identities. 

First, we summarize in Table \ref{tab:gates} how Majorana and qubit operators transform under our gate set.

\subsubsection*{\texorpdfstring{$\mathrm{BRAID}$}{BRAID} gate identities}
In our discrete gate set, the only braiding operation we consider is $\mathrm{BRAID}_{ij}=\exp\left(-\frac{\pi}{4} \tilde \gamma_i \gamma_j\right)$. For the compilations of logical gates, the braidings $\exp\left(-\frac{\pi}{4} \tilde \gamma_i \tilde \gamma_j\right)$ and $\exp\left(-\frac{\pi}{4} \gamma_i \gamma_j\right)$ are also useful. These can be engineered by
\begin{align}
    \exp\left(-\frac{\pi}{4} \tilde \gamma_i \tilde \gamma_j\right) &= (\mathrm{S}^\mathrm{f}_j)^\dagger \mathrm{BRAID}_{ij} \mathrm{S}^\mathrm{f}_j,\notag \\
    \exp\left(-\frac{\pi}{4} \gamma_i \gamma_j\right) &= \mathrm{S}^\mathrm{f}_i \mathrm{BRAID}_{ij} (\mathrm{S}^\mathrm{f}_i)^\dagger,\notag\\
    \mathrm{BRAID}_{ij}^\dagger &= \mathrm{Z}^\mathrm{f}_j \mathrm{BRAID}_{ij} \mathrm{Z}^\mathrm{f}_j,
    \label{eq:arbbraids}
\end{align}
which can be proven by rewriting
$\mathrm{BRAID}_{ij}=\frac{1}{\sqrt{2}}\left(1-\tilde \gamma_i \gamma_j\right)$
and using the transformations in Table \ref{tab:gates}. Note that $(\mathrm{BRAID}_{ij})^\dagger\neq \mathrm{BRAID}_{ji}$.

\subsubsection*{\texorpdfstring{$\mathrm{CZ}$}{CZ} gate identities}
From $(n^\mathrm{q})^2=n^\mathrm{q}$ and $n^2=n$ it follows that
\begin{align}
    \exp(-i\pi n^\mathrm{q}_in_j)&=\mathds{1}-2n^\mathrm{q}_in_j,\notag\\
    \exp(-i\pi n_i n_j)&=\mathds{1}-2n_i n_j.
\end{align}

\subsection*{Repetition code logical operations}
In this subsection, we show explicitly that the circuits shown in Fig.~\ref{fig:2} realize logical gates in the repetition code. Logical gates need to leave the stabilizers invariant. For the repetition code, the logical Majorana fermions are distinct from those on which the stabilizers act (c.f.~Fig.~\ref{fig:2}a). The logical gate compilations below do not act on the Majorana fermions which are part of the stabilizers and therefore leave stabilizers invariant. What is left to show is that our compilations act as they should on the logical Majoranas. To do so, recall that the two logical Majoranas of a single codeblock are given by single physical Majorana fermion operators on the edges, $\gamma^\mathrm{L}=\gamma_1$ and $\tilde \gamma^\mathrm{L}=\tilde \gamma_N$, and, in particular, $n=\frac{1}{2}(1+i\tilde\gamma^\mathrm{L}\gamma^\mathrm{L})=\frac{1}{2}(1+i\tilde\gamma_N\gamma_1)$. In the following, we will drop the ``L'' superscripts and instead denote the codeblock in the superscript. For example, $\tilde\gamma^a_N$ is the $\tilde\gamma^\mathrm{L}$ logical Majorana operator of codeblock $a$.

\begin{table}
    \begin{tabular}{|c|c|c|c|c|c|c|}
    \hline
        &  $\gamma_i$& $\tilde \gamma_i$& $\gamma_j$& $\tilde \gamma_j $ & $\mathrm{Z}_i$& $\mathrm{X}_i$\\[0.5ex] \hline\hline
        $\mathrm{T}_i^\mathrm{f}$& $\frac{1}{\sqrt{2}}(\tilde \gamma_i+\gamma_i)$& $\frac{1}{\sqrt{2}}(\tilde \gamma_i-\gamma_i)$ & & & &\\
        \hline
        $\mathrm{S}_i^\mathrm{f}$ &  $\tilde \gamma_i$ & $-\gamma_i$& & & &\\
        \hline
        $\mathrm{Z}^\mathrm{f}_i$ & $-\gamma_i$ &$-\tilde\gamma_i$ & & & &\\
        \hline
        $\mathrm{H}_i$ & & & & & $\mathrm{X}_i$& $\mathrm{Z}_i$\\
        \hline
        $\mathrm{Z}_i$ & & & & & & -$\mathrm{X}_i$\\
        \hline
        BRAID$_{ij}$ &  & $-\gamma_j$ &$\tilde\gamma_i$ & & & \\
        \hline
        CZ$^\mathrm{f}_{ij}$ & $\gamma_i\mathrm{Z}_j^\mathrm{f}$ & $\tilde\gamma_i\mathrm{Z}_j^\mathrm{f}$ & $\gamma_j\mathrm{Z}_i^\mathrm{f}$ & $\tilde\gamma_j\mathrm{Z}_i^\mathrm{f}$ & & \\
        \hline
        CZ$^\mathrm{qf}_{ij}$ & & & $\mathrm{Z}_i\gamma_j$ & $\mathrm{Z}_i\tilde\gamma_j$ & $\mathrm{Z}_i$  & $\mathrm{Z}_j^\mathrm{f} \mathrm{X}_i$ \\
        \hline
    \end{tabular}
    \caption{\textbf{Transformation of Majorana and qubit operators under our gate set.} We show $U^\dagger A U$, where $U$ is specified in the first column and $A$ in the first row. The indices are defined such that braiding is given by $\mathrm{BRAID}_{ij}=\exp\left(-\frac{\pi}{4} \tilde \gamma_i \gamma_j\right)$ and qubit-fermion controlled-Z as $\mathrm{CZ}^\mathrm{qf}_{ij}=\exp\left(-i\pi n^\mathrm{q}_i n_j \right)$. An empty field means that the operator is invariant. Note that $\mathrm{Z}_i^\mathrm{f}=-i\tilde \gamma_i \gamma_i$.}
    \label{tab:gates}
\end{table}

\subsubsection*{Logical \texorpdfstring{$\mathrm{CZ}^\mathrm{f}$}{CZf} gate} 
Writing the logical $\mathrm{CZ}^\mathrm{f}$ gate between two codeblocks $a$ and $b$ in terms of logical Majoranas, we get
\begin{equation}
    \exp\left(-i\pi  n^a  n^b \right) = \frac{1}{2}\left(1+i\tilde \gamma^a_N \gamma^a_1 +i\tilde \gamma^b_N \gamma^b_1 - \tilde \gamma^a_N \gamma^a_1\tilde \gamma^b_N \gamma^b_1 \right).
\end{equation}
We construct this gate from a physical CZ$^\mathrm{f}$ gate between the two edge sites, which in terms of the Majoranas is written as 
\begin{align}
    &\exp\left(-i\pi  n^a_1  n^b_1 \right)\notag\\&= \frac{1}{2}\left(1+i\tilde \gamma^a_1 \gamma^a_1 +i\tilde \gamma^b_1 \gamma^b_1  - \tilde \gamma^a_1 \gamma^a_1 \tilde \gamma^b_1 \gamma^b_1\right),
\end{align}
where $n^a_1$ is the number operator acting on the first site of codeblock $a$.
While the general form is already correct, we need to replace $\tilde \gamma^b_1\rightarrow \tilde \gamma^b_N$ and $ \tilde \gamma^a_1 \rightarrow \tilde\gamma^a_N$. Such a ``replacement operation'' is performed by conjugating with braidings (see  Table~\ref{tab:gates}):
\begin{align}
    &\exp\left(-i\pi  n^a  n^b \right)\notag\\ &= e^{\frac{\pi}{4} \tilde\gamma_N^b \tilde\gamma_1^b} e^{\frac{\pi}{4}   \tilde\gamma^a_N \tilde\gamma^a_1} e^{-i\pi  n^a_N n^b_1} e^{-\frac{\pi}{4}   \tilde\gamma^a_N\tilde\gamma^a_1}e^{-\frac{\pi}{4} \tilde\gamma_N^b  \tilde\gamma_1^b} .
\end{align}
Using Eq.~\eqref{eq:arbbraids} to express this sequence in terms of BRAID and \Sf gates, we find the circuit shown in Fig.~\ref{fig:2}d. Note that this compilation also works in the same way for an arbitrary-angle gate $\exp(-i\theta n^a n^b)$ by replacing the physical \czf with $e^{-i\theta  n^a_N n^b_1}$. 

\subsubsection*{Logical braiding gate gate.} A logical braiding gate is given in terms of the Majorana fermions as
\begin{equation}
    \mathrm{BRAID}_{ab}=\exp\left(-\frac{\pi}{4}\tilde \gamma^a_N \gamma^b_1 \right).
\end{equation}
This is in fact a physical braiding gate between the edges of two codeblocks, as depicted in Fig.~\ref{fig:2}d.

\subsubsection*{Logical qubit-fermion gate.}
Similarly, we decompose the logical qubit-fermion gate as
\begin{equation}
    \exp\left(-i \pi n^\mathrm{q} n^a \right)=e^{\frac{\pi}{4} \gamma_1^a \gamma_N^a} e^{-i\pi n^\mathrm{q} n_N^a}e^{-\frac{\pi}{4} \gamma_1^a \gamma_N^a}.
\end{equation}
Again using Eq.~\eqref{eq:arbbraids}, we find the circuit shown in Fig.~\ref{fig:2}d.

\subsubsection*{Logical phase gate.} A logical fermion phase gate $\exp\left(i\theta  n^a\right)$ is written in terms of the Majorana fermions as
\begin{equation}
    \exp\left(i \frac{\theta}{2}  (1+i\tilde \gamma^a_N \gamma^a_1) \right).
\end{equation}
Therefore, this gate is, up to a global phase, equivalent to an arbitrary-angle braiding between the two edge Majoranas. In particular, the \Sf gate is equivalent to \braid and the \Tf gate is a $\pi/8$ braiding.

\subsubsection*{Syndrome measurement.} For the measurement of the syndromes $i\tilde \gamma_i \gamma_{i+1}$, we study the following transformation of the final $Z$ measurement of the ancilla qubits: $U^\dagger Z_\mathrm{anc} U$ where we use $U=\mathrm{H}_\mathrm{anc} \mathrm{CZ}^\mathrm{qf}_{\mathrm{anc},i+1} \mathrm{BRAID}_{i+1,i}\mathrm{S}^\mathrm{f}_i \mathrm{H}_\mathrm{anc}$. Using the transformation rules in Table \ref{tab:gates}, we find $U^\dagger Z_\mathrm{anc} U=Z_\mathrm{anc}i\tilde \gamma_i \gamma_{i+1}$, and, therefore, initializing the ancilla in $\ket{0}$, a measurement of the ancilla yields a measurement of $i\tilde \gamma_i \gamma_{i+1}$. The additional gates in Fig.~\ref{fig:2}b compared to the circuit defined here are needed to not disturb the logical information. This enables true mid-circuit error correction.

\subsection*{Color code logical operations}
In this subsection, we show that the circuits in Fig.~\ref{fig:3} for \czf, \braid, and \czqf are indeed logical gates. 

We show that stabilizers transform to products of stabilizers and that the gates transform the logical Majorana operators in exactly the same way as the physical gates transform physical Majorana operators (c.f.~Table \ref{tab:gates}). Therefore, we study the unitary transformation $U^\dagger\gamma^\mathrm{L} U$ and $U^\dagger\tilde \gamma^\mathrm{L} U$ of the logical Majorana operators under the circuits shown in Fig.~\ref{fig:3}. All of the gates are transversal because each physical gate acts only on a single physical fermion in each codeblock, similarly to the qubit-only architecture in Ref.~\cite{Bluvstein2023}.

Similar to the notation in the previous Methods subsection on repetition code logical operations, we attach a superscript to physical Majorana operators to indicate the codeblock and let the physical site index $i$ only run from $1$ to the number of physical sites within the codeblock. For example, $\gamma^a_i$ indicates physical lattice site $i$ in codeblock $a$, where $i\in \lbrace 1,7\rbrace$ for the seven-site triangular color code. The below calculations hold for all triangular color codes.

\subsubsection*{\texorpdfstring{$\mathrm{BRAID}$\,}{BRAID }gate} 

The gate
\begin{equation}
\mathrm{BRAID}_{ab}=\prod_i \exp\left(-\frac{\pi}{4} \tilde \gamma_i^a \gamma_i^b\right)
\end{equation}
is a logical braiding gate. To see this, first note that the individual braiding gates commute with Majorana operators on sites $j\neq i$ because they are quadratic. 
Therefore, this gate transforms the logical operators $\tilde \gamma^a=\prod_i \tilde \gamma^a_i$ and $\gamma^b=\prod_i \gamma^b_i$ as
\begin{align}
&\mathrm{BRAID}_{ab}^\dagger \tilde \gamma^\mathrm{L}_b  \mathrm{BRAID}_{ab} \notag\\&= \prod_i \exp\left(\frac{\pi}{4} \tilde \gamma_i^a \gamma_i^b\right) \gamma^b_i \exp\left(-\frac{\pi}{4} \tilde \gamma_i^a \gamma_i^b\right)\notag\\ &= \prod_i  \tilde\gamma^a_i = \tilde\gamma^\mathrm{L}_a,\\&\mathrm{BRAID}_{ab}^\dagger \tilde \gamma^\mathrm{L}_a  \mathrm{BRAID}_{ab} \notag\\&= \prod_i \exp\left(\frac{\pi}{4} \tilde \gamma_i^a \gamma_i^b\right) \tilde \gamma^a_i \exp\left(-\frac{\pi}{4} \tilde \gamma_i^a \gamma_i^b\right)\notag\\ &= \prod_i \left(-\gamma^b_i\right)=-\gamma^\mathrm{L}_b, 
\end{align}
where in the very last step we used the fact that the triangular color codes have an odd number of physical fermions.

Stabilizers transform to stabilizers because they are products of an even number of Majorana fermions consisting of only either $\gamma$ or $\tilde \gamma$, but not both. Braiding will transform them to products of the other type residing on the same lattice sites of the other codeblock. Because of the self-duality (in the sense of the symmetry $\gamma \leftrightarrow \tilde \gamma$ of the stabilizers) of our code, the resulting product is also a stabilizer. Because stabilizers consist of an even product of Majorana operators, there is also no minus sign resulting from the transformation.

\subsubsection*{\texorpdfstring{\Sf\,}{Sf} gate} The transversal gate
\begin{equation}
\mathrm{S}^\mathrm{f}_{a}=\prod_i \mathrm{S}^\mathrm{f}_{i}
\end{equation}
is shown to be a logical phase gate \Sf\,in the same way as the BRAID gate.

\begin{figure*}
    \centering
\includegraphics[width=\textwidth]{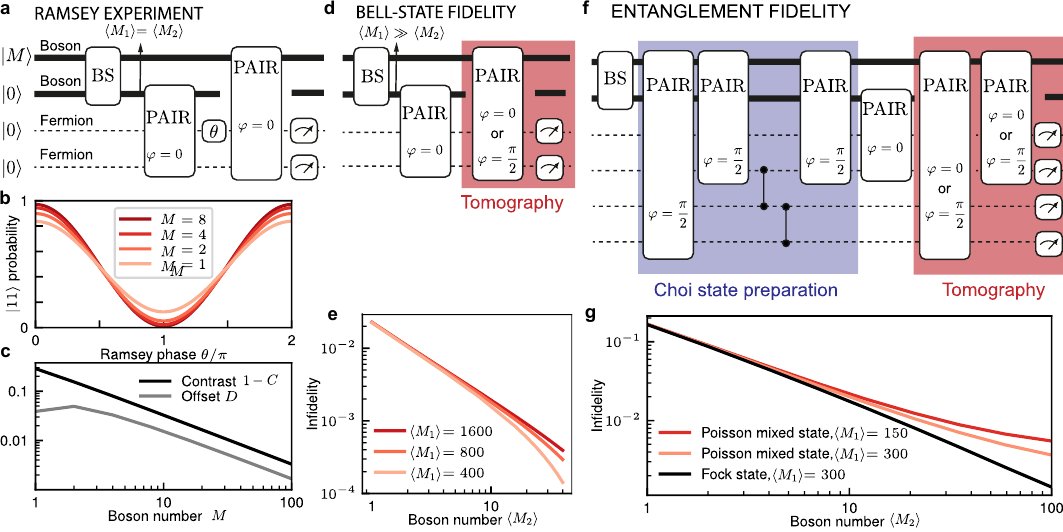}    \caption{\textbf{Benchmarking the pairing gate.} Pairing gate definition in Eq.~\eqref{eq:pairmoregeneral}, with the occupation number entering its phase given by $\braket{M_i}$ of the mode $i$ it acts on. \textbf{a}, Circuit for implementing the Ramsey circuit, reproduced from Fig.~\ref{fig:5}c of the main text. The beamsplitter phase is $\pi/4$. \textbf{b}, Ramsey fringes, reproduced from Fig.~\ref{fig:5}d of the main text. \textbf{c}, Contrast $C$ and offset $D$ obtained from fitting $\frac{1}{2}+\frac{1}{2}\left(C\cos(\theta)-D\right)$ to the Ramsey fringes in subfigure b. \textbf{d}, Circuit for performing tomography within the $\ket{00}$ and $\ket{11}$ manifold. The beamsplitter phase is $\arccos\left(\sqrt{\braket{M_2}/(\braket{M_1}+\braket{M_2})}\right)$, chosen such that the average occupation number of the first, second bosonic mode (counted from the top) is $\braket{M_1}$, $\braket{M_2}$ after the beamsplitter, respectively. \textbf{e}, Infidelity $1-\braket{\psi|\rho|\psi}$, where $\ket{\psi}=(\ket{00}+i\ket{11})/\sqrt{2}$ and $\rho$ is the density matrix reconstructed from the numerical tomography. Initial state is the same is in a. \textbf{f}, Circuit for numerically measuring the average gate fidelity for all possible input states of the form $\alpha\ket{00}+\beta\ket{11}$. The Choi state is defined as $\ket{\psi_\mathrm{Choi}}=(\ket{0000}+\ket{1111})\sqrt{2}$. The tomography is performed within the subspace spanned by the states $\ket{0000}$, $\ket{0011}$, $\ket{1100}$, $\ket{1111}$. \textbf{g}, Infidelity $1-\bra{\psi_\mathrm{Choi}}\rho\ket{\psi_\mathrm{Choi}}$ for either a Fock state input $\ket{M}$ on mode 1 (black line) or a mixed state on mode 1 with a Poissonian number distribution with average boson number $\braket{M_1}+\braket{M_2}$ (colored lines). The black line is shown in Fig.~\ref{fig:5}e of the main text.}
\label{fig:ED}
\end{figure*}

\subsubsection*{\texorpdfstring{\czf\,}{CZf }gate} The \czf gate is given by
\begin{equation}\mathrm{CZ}^\mathrm{f}_{ab}=\prod_i \exp\left(-i\pi n^a_i n^b_i\right).
\end{equation}
To check the transformation of the logical operators, consider

\begin{align}
   \left( \mathrm{CZ}^\mathrm{f}_{ab} \right)^\dagger\gamma^a \mathrm{CZ}^\mathrm{f}_{ab} &= \prod_i \exp\left(i\pi n^a_i n^b_i\right) \gamma^a_i\exp\left(-i\pi n^a_i n^b_i\right) \\&= \prod_i (-i\gamma^a_i\tilde\gamma^b_i\gamma^b_i)\\& = -i \gamma^\mathrm{L}_a \tilde\gamma^\mathrm{L}_b\gamma^\mathrm{L}_b.
\end{align}
The last step is non-trivial: the global sign  is a combination of the overall $(-i)^n$ and a minus sign resulting from reordering the physical Majorana fermions. The triangular Honeycomb color codes have $n=(3d^2+1)/4$ physical fermions for odd $d$~\cite{Litinski2017}. For convenience, we define a $\tilde d$ as $d=2\tilde d+1$, and write $n=3\tilde d^2+3\tilde d+1$. One can show that $\prod_{i=1}^n(\gamma^a_i\tilde\gamma^b_i\gamma^b_i)=(-1)^{\lfloor n/2 \rfloor} \prod_{i=1}^n\gamma^a_i\prod_{i=1}^n\tilde\gamma^b_i\prod_{i=1}^n\gamma^b_i$. Therefore, the overall prefactor is given by $(-i)^{3\tilde d^2+3\tilde d+1} (-1)^{\lfloor (3\tilde d^2+3\tilde d+1)/2 \rfloor}=-i$ for all integer $\tilde d$.

Similarly, stabilizers transform to products of stabilizers after the transformation. To see the global sign, note that, in the triangular color codes, stabilizers are products of either four or six Majorana operators. For four operators, $(-i)^4=1$ and the sign from reordering is $(-1)^{\lfloor 4/2 \rfloor}=1$. For six operators, $(-i)^6=-1$ and the sign from reordering is $(-1)^{\lfloor 6/2 \rfloor}=-1$, again yielding an overall $+1$.

\subsubsection*{\texorpdfstring{\czqf\,}{CZqf }gate}  Our qubit-fermion gate \czqf is implemented as \begin{equation}
\mathrm{CZ}^\mathrm{qf}_{ab}=\prod_i \exp\left(-\pi n^{\mathrm{q},a}_i n^b_j\right).
\end{equation}
This is shown in the same manner as the previous gates. In particular, the fact that $\prod_{i=1}^n(\tilde\gamma^a_i\gamma^a_i)=(-1)^{\lfloor n/2 \rfloor} \prod_{i=1}^n\tilde\gamma^a_i\prod_{i=1}^n\gamma^a_i$  leads to the correct minus signs. Stabilizers map onto products of fermion and qubit stabilizers under the action of this gate. Therefore, it is a valid logical qubit-fermion gate.

\section*{Experimental considerations}

In this section, we present some details of the neutral-atom scheme, in particular for implementing pairing. 

\subsection*{Non-local pairing from on-site pairing}

We demonstrate that the pulse sequence shown in Fig.~\ref{fig:5}a of the main text indeed implements the target inter-site braiding gate
\begin{equation}
    \mathrm{PAIR}_{ij}=\exp\left(i\frac{\pi}{4\sqrt{M}}(e^{i\varphi}b_k p_i^\dagger p_j^\dagger + h.c.)\right), \label{eq:pairmoregeneral}
\end{equation}
where we generalized the definition in the main text to include a phase $\varphi$. The pulse sequence is arranged such that the dissociation gate
\begin{equation}
    U_\mathrm{dis}=\exp\left(-i\frac{\pi}{4\sqrt{M}}(e^{i\varphi}b_i p_{i\uparrow}^\dagger p_{i\downarrow}^\dagger + h.c.)\right)
\end{equation}
is conjugated with a series of unitaries. We therefore have to show that these unitaries transform bosonic and fermionic creation/annihilation operators in such a way that $U_\mathrm{dis}$ is transformed into $\mathrm{PAIR}_{ij}$. First, the beamsplitter gate
\begin{equation}
\mathrm{BS}=\exp\left(\frac{\pi}{4}(b_i^\dagger b_k - h.c.)\right)
\end{equation}
acts as $(\mathrm{BS}^\dagger)^2 b_i^\dagger \mathrm{BS}^2=b_k^\dagger$. Hence,  $(\mathrm{BS}^\dagger)^2 U_\mathrm{dis} \mathrm{BS}^2=\exp\left(-i\frac{\pi}{4\sqrt{M}}(b_k p_{i\uparrow}^\dagger p_{i\downarrow}^\dagger + h.c.)\right)$. Similarly, 
\begin{equation}
    U_\mathrm{t\downarrow}=\exp\left(-i\frac{\pi}{2}(p_{i\downarrow}^\dagger p_{j\downarrow} + h.c.)\right)
\end{equation}
acts as $U_\mathrm{t\downarrow}^\dagger p_{i\downarrow}^\dagger U_\mathrm{t\downarrow}=i p_{j\downarrow}^\dagger$ and
\begin{equation}
    U_\mathrm{\uparrow\downarrow}=\exp\left(-i\frac{\pi}{2}(p_{i\uparrow}^\dagger p_{i\downarrow} + h.c.)\right)
\end{equation}
acts as $U_\mathrm{\uparrow\downarrow}^\dagger p_{i\uparrow}^\dagger U_\mathrm{\uparrow\downarrow}=ip_{i\downarrow}^\dagger$. Therefore, 
\begin{equation}
\mathrm{PAIR}_{ij}=U^\dagger_\mathrm{\uparrow\downarrow}U^\dagger_\mathrm{t\downarrow}(\mathrm{BS}^\dagger)^2U_\mathrm{dis}\mathrm{BS}^2U_\mathrm{t\downarrow}U_\mathrm{\uparrow\downarrow},
\end{equation}
which is the sought-after pulse sequence when identifying spinless fermions with fermions in the $\downarrow$ state. The Hermitian conjugate of each gate is performed by conjugating with phase gates, e.g.~$U^\dagger_\mathrm{\uparrow\downarrow}=Z^\mathrm{f}_\downarrow U _\mathrm{\uparrow\downarrow} Z^\mathrm{f}_\downarrow$.

\subsection*{Benchmarking the pairing gate}

Here we show that Eq.~\eqref{eq:pairmoregeneral} indeed effectively implements a pairing gate as defined in Eq.~\eqref{eq:pairtarget} when the molecule/boson mode on which it acts has large occupation. To do so, we discuss three different standard benchmark experiments and study them numerically: Ramsey spectroscopy, state tomography of a Bell state, and process/gate tomography based on measuring the entanglement fidelity.

\subsubsection*{Ramsey experiment} First, we want to test whether there is effective coherence created after applying the pairing gate. As for qubits, this is done by performing a Ramsey-type experiment, i.e. applying two $\pi/2$ pairing pulses, with a phase gate in-between. At the same time as testing whether this is the case, we also test whether two separate molecule/boson modes can be used for the two pulses. It is clear that these gates need to be effectively phase coherent. We achieve this by creating two modes with Poissonian number distributions, mimicking a coherent state, by applying a 50/50 beamsplitter on the state $\ket{M}$, see Fig.~\ref{fig:ED}a (also shown in Fig.~\ref{fig:5}c of the main text). Being able to perform both gates using separate bosonic modes is important---this tests whether ultimately we can parallelize pairing operations. In Fig.~\ref{fig:ED}b (also shown in Fig.~\ref{fig:5}d of the main text), we show the result of the Ramsey sequence. We indeed find a Ramsey fringe, verifying that effective coherence has been created in the fermions. The contrast increases with $M$ as shown in Fig.~\ref{fig:ED}c, and a slight offset of the oscillations is visible. More quantitatively, we fit an oscillatory function to the fringe (see figure caption for its definition), finding that the contrast approaches unity as $\propto 1/M$. At the same time, the offset decreases with the same functional dependence. Hence, in the limit $M\rightarrow \infty$, perfect Ramsey oscillations are approached.

 \subsubsection*{Bell-state fidelity} In order to probe that the state after the pairing gate is not only an effective coherent superposition, but exactly the state that we aim to prepare, in Fig.~\ref{fig:ED}d we construct a circuit effectively performing Bell-state tomography. To do so, we perform three separate circuits: we always first apply a pairing gate to effectively prepare the Bell state $\ket{00}+i\ket{11}$ and then perform either a pairing gate with $\varphi=0$ or $\varphi=\pi/2$ or no pairing gate at all. From these three measurements, we reconstruct the density matrix in the $\ket{00}$ and $\ket{11}$ manifold. In order to separate the boson-number dependence of the preparation step from the tomography step, we choose the beamsplitter phase such that the boson number $M_1$ of the mode we use to perform the second gate is much larger than the boson number $M_2$ of the mode we use to perform the first gate. Having reconstructed the density matrix, we show the infidelity of the Bell-state preparation as a function of $M_2$, for different values of $M_1$. We find in Fig.~Fig.~\ref{fig:ED}e that the curves collapse as $M_1\rightarrow \infty$, indicating again a power-law dependence of $\propto 1/M$. Crucially, already for $M_2\approx 20$, we find a Bell-state fidelity of approx. $99.9\,\%$.

\subsubsection*{Gate fidelity} Finally, we would like to benchmark our gate for all possible initial states. To do so, we perform gate/process tomography by measuring the entanglement fidelity $F_\mathrm{e}$, relying on the relation
\begin{equation}
    F_\mathrm{avg}=1-\frac{dF_\mathrm{e}+1}{d+1},
\end{equation}
where $d$ is the Hilbert-space dimension and
\begin{equation}
    F_\mathrm{avg}=\int \mathrm{d}\psi \braket{\psi|U^\dagger \mathcal{E}(\ket{\psi}\bra{\psi}) U|\psi}
\end{equation}
is the state-averaged fidelity between the target unitary $U$ and the realized channel $\mathcal{E}$. The entanglement fidelity is measured by first preparing the Choi state $\ket{\psi_\mathrm{Choi}}=\frac{1}{\sqrt{d}}\sum_{i=1}^d \ket{i}\ket{i}$, constructed using an ancillary system of the same size as the target system, applying the target channel $\mathcal{E}\otimes \mathds{1}$, and finally performing state tomography on the resulting state. In our case, $\mathcal{E}$ is the pairing gate and hence all gates stay in the even parity sector such that $d=2$, $\ket{\psi_\mathrm{Choi}}=\frac{1}{\sqrt{2}}\left(\ket{00}\ket{00}+\ket{11}\ket{11} \right)$, and the tomography is only performed in that sector.

We show the entanglement fidelity in Fig.~\ref{fig:ED}g, black line (same as Fig.~\ref{fig:5}e), as a function of the boson number $M_2$ in the second mode, where the result is converged with respect to $M_1$. We find that $>99.8\,\%$ gate fidelity is achieved for $M_2=100$. 

Finally, in experiment, it is much easier to prepare a mixed state with Poissonian number distribution with average number $\bar M=M_1+M_2$ rather than a Fock state, for example by preparing a Bose-Einstein condensate~\cite{Jochim2003} in that tweezer ($\bar M$ is the number of condensed molecules in the ground state of the tweezer). In that case, there will be a shot-to-shot miscalibration of the angle of the pairing gate, leading to a further source of infidelity. To check the robustness of our gate with respect to this effect, we replace the initial state $\ket{M}$ of the molecules with a state $\rho_{\bar M}\otimes \ket{0}\bra{0}$, where $\rho_{\bar M}=\sum_n p_{\bar M}(n) \ket{n}\bra{n}$ and $p_{\bar M}(n)=\frac{\bar M^n e^{-\bar M}}{n!}$ is a Poisson distribution with mean $\bar M$. The phase of the pairing gate is fixed to $\pi/\left(4 \sqrt{\bar M}\right)$. We show the resulting average infidelity as red lines in Fig.~\ref{fig:ED}g. We find that, for small $M_2$, the result is indistinguishable from the Fock-state result. For larger $M_2$, deviations occur. These deviations become smaller as $M_1$ increases. Hence, we can expect that as $\bar M$ and $M_2$ increase, arbitrarily high fidelities can be achieved even when the initial molecule state is a mixed state with a Poissonian number distribution.

\subsection*{Experimental time scales}

We discuss more detailed considerations for the parameter regimes that need to be fulfilled to implement the pairing scheme.

For spin-selective tunneling $U_{t\downarrow}$, a magnetic field gradient needs to be applied to suppress tunneling of the $\uparrow$ spin species. Similarly, to guarantee that no fermionic atoms tunnel from the target tweezer into the BEC tweezer during the application of the beamsplitter ($\mathrm{BS}$), we apply a magnetic field gradient and detune the two trap frequencies such that the detuning induced by the magnetic field gradient is compensated by the tweezer detuning for the molecules. Because the molecules carry twice the magnetic moment of atoms, the atom tunneling is still suppressed by the residual gradient. For dissociation, the rf frequency $\nu_\mathrm{RF}$ is chosen to bridge the bare atom energy $\nu_\uparrow-\nu_\downarrow$ as well as the binding energy $E_\mathrm{B}$ of the molecule and the trap depth $\omega_\mathrm{tweezer}$ of the target tweezer (which again enters due to the differing magnetic moments of atoms and molecule), $\nu_\mathrm{RF}=\nu_\uparrow-\nu_\downarrow +E_\mathrm{B} + \omega_\mathrm{tweezer}$. This can be easily fulfilled: typical trap frequencies are in the tens of kHz, and binding energies in ${}^6$Li$_2$ are in the hundreds of kHz, for ${}^{84}$Sr$_2$ even in the hundreds of MHz~\cite{Regal2003,Chin2004,Chin2005}. Binding energies may be tuned by a Feshbach resonance, however this needs to be counterbalanced with the interactions between bosons and fermions as well as between the $\downarrow$ and $\uparrow$ fermions, which need to be small for the duration of the gate. The coupling $g$ also needs to be much smaller than the tweezer trap frequency in order to guarantee that the atom pair is created in the motional ground state of the tweezer. 

\subsection*{Atom movement}
\begin{figure}
    \centering
    \includegraphics[width=1\linewidth]{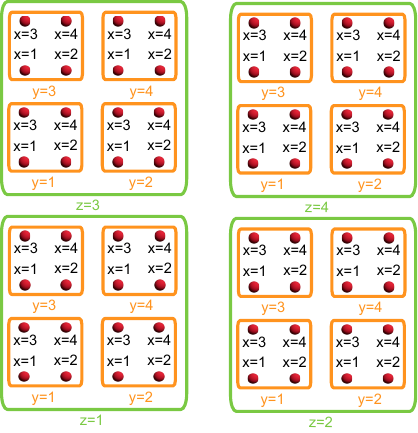}
    \caption{\textbf{Spatial arrangement of sites for 3D FFT.} The index of a site is written with respect to the $(x,y,z)$ coordinates. The Fourier transforms can be applied in parallel on each orange block.}
    \label{fig:fftarrangement}
\end{figure}
In this section, we show that the movement-time-scaling with $N$ can be neglected under reasonable experimental assumptions because of the constant (with $N$) overhead from having to perform $\mathcal{O}(1)$ quantum error correction cycles after every layer of gates and the spatial structure of the high-dimensional Fourier transform circuits. This is important in order to translate our exponential circuit depth advantage into a corresponding improvement in circuit runtime when modeling the circuit runtime as depth multiplied by time per layer of gates.

We consider atoms arranged in a two-dimensional grid. In that case, the scaling of the maximum movement time is rather slow with $N$: due to the fact that the tweezers are constantly accelerated/decelerated during movement, the time scales as $t=2\sqrt{s/a}$, where $a$ is the acceleration and $s$ is the distance. The maximum distance is given by $\sqrt{2N}a_\mathrm{lat}$, where $a_\mathrm{lat}$ is the lattice spacing. Therefore, the maximum movement time only scales as $t\propto N^{1/4}$. Assuming a lattice spacing of approx. $a_\mathrm{lat}= 5\mu$m (this is conservative - with an optical lattice, $a_\mathrm{lat}= 0.5\mu$m could be achieved, although more complicated movement schemes would have to be employed in that case), the longest movement time for $N=1$ million (presumably close to the number of atoms controllable in a single array) and assuming $a\approx 10^4$ m/s$^2$ (see e.g. ~\cite{Bluvstein2022}) is approximately $2$ms. This movement time needs to be compared to the constant overhead from quantum error correction. One quantum error correction cycle takes approximately $1$ms~\cite{Bluvstein2023} and while recently, it was shown that one such cycle per layer of gates suffices in the surface code~\cite{zhou2024}, it is unclear how this generalizes to color codes, which are usually more challenging to decode and therefore may require multiple rounds of QEC per gate layer. Therefore, even the longest movement time in a large array of $1$ million sites is comparable to the constant overhead from QEC.

Moreover, long-distance moves are rarely needed for simulating $3$D materials. Long-distance moves only appear in the 3D FFFT and 3D quantum FFT. Both are implemented as three independent 1D Fourier transforms along the individual dimensions. To apply these transforms move-efficiently, we first arrange the sites as shown in Fig.~\ref{fig:fftarrangement}. The transform along the x coordinate can then be applied in parallel on all (y,z) blocks. To do so, only moves of distance $\propto N^{1/6}$ are required. To apply the y transform, we first re-arrange the sites within each block of fixed $z$ coordinate such that now y and x coordinate are exchanged in Fig.~\ref{fig:fftarrangement}. While this requires moves of distance $\propto N^{1/4}$, the FFT (which is a much deeper circuit) then again only requires moves of distance $\propto N^{1/6}$. Finally, to apply the FFT along the z coordinate, we exchange z and y coordinates by moves. This is the only step requiring moves of distance $\propto N^{1/2}$.

Another hardware restriction comes from routing the qubits into arbitrary permutations. It was recently shown that using a straightforward upgrade to current atom-movement implementations (adding a second SLM) enables a routing algorithm which succeeds in $\Theta(\log_2(N))$ steps~\cite{constantinides2024}, therefore providing negligible overhead.

Putting everything together, the total time of the algorithm is given by an expression of the form $\log_2(N)(c_\mathrm{qec}+c_1N^{1/12}\log_2(N)+c_2 N^{1/8}\log_2(N)+c_3 N^{1/4}\log_2(N))$, where the overall $\log_2(N)$ comes from the algorithmic depth, the $\log_2(N)$ inside the bracket from routing. Because the circuits to apply the Fourier transforms are much deeper than the movement circuits between the three Fourier transforms, $c_\mathrm{qec}>c_1\gg c_2,c_3$ holds. As discussed above, using practically relevant $N$, $c_\mathrm{qec}$ will not be overwhelmed by terms inside the parentheses that scale with $N$ and to a good approximation, the leading time scaling is logarithmic up to the maximum realistic $N$.

\subsection*{Implementation in dots and donors in silicon}

We discuss a solid-state implementation in dots or donors, where spin-polarized electrons encode fermions. 

Qubits with long coherence times can be encoded in nuclear spins, in which case a $\mathrm{CZ}^{\mathrm{qf}}$ gate is implemented using the hyperfine interaction~\cite{Rad2024}. Alternatively, a qubit can be encoded in the electron spin. Both qubit-qubit and qubit-fermion entangling gates are then implemented by Heisenberg exchange~\cite{Petta2005,Burkard2023}. Single-fermion gates may be realized using electric-field gates, and the pairing gate using the proximity effect to a superconductor~\cite{Stanescu2010}, where Cooper pairs take the role of molecules. 

\section*{Algorithms for quantum simulation}

\subsection*{Time evolution under the materials interaction term in logarithmic depth}
In this section, we discuss how ancillas can be used to reduce the circuit depth of applying the interaction part of the materials Hamiltonian to depth $\mathcal{O}(\mathrm{log}(N))$. In other words, we aim to implement the time-evolution operator
\begin{equation}
    \hat U = \mathrm{exp}\left(-it \sum_{i<j} J_{ij} n_i n_j \right).
\end{equation}
Without ancillas, currently-known schemes achieve $\mathcal{O}(N)$ depth in the worst case due to the fact that only $\mathcal{O}(N)$ gates can be applied in parallel (assuming access to only single and two-qubit gates). 

We first discuss the fan-out scheme introduced in Refs.~\cite{moore2001,hoyer2005}, which requires $\mathcal{O}(N^2)$ ancillas. It is based on encoding every site into an $(N-1)$-site bit-flip repetition code. In other words, we map the state of a single qubit to
\begin{equation}
	\alpha \ket{0} + \beta \ket{1} \rightarrow \alpha \ket{0\cdots 0} + \beta \ket{1 \cdots 1}. 
\end{equation}
The logical $\bar Z$ operator of logical qubit $i$ is degenerate, in the sense that acting with $\hat Z$ on any of its $N-1$ physical qubits implements $\bar Z$. Because in $\hat U$, every site interacts with $N-1$ other sites, this means that we can apply all $ZZ(\phi_\mathrm{ij})$ gates in parallel in depth one. Afterwards, we decode the repetition codes and discard the ancillas. The repetition code state preparation requires a fan-out gate, which can be decomposed into a measurement-free circuit of depth $\mathcal{O}(\log(N))$, or a circuit with depth $O(1)$ using measurements~\cite{pham2013}. Alternatively, a fan-out gate can be realized with continuous Hamiltonian time evolution in a time constant in the number of qubits $N$ with power-law decaying interactions~\cite{guo2022} or even in a time of $O(\log^2 N/N)$ with all-to-all interactions~\cite{yin2025}. In any case, this scheme enables applying $\hat U$ in total depth $\mathcal{O}(\log(N))$ while using $N(N-2)$ ancillas.

We now discuss a scheme, in the context of quantum computing first presented in section V.A of Ref.~\cite{low2019a}, to reduce the number of ancillas to $\mathcal{O}(N\log(N))$ for time evolution under the interaction part of the materials Hamiltonian by using its translational invariance.  A key observation is that, for the materials Hamiltonian, the interaction matrix is translationally invariant, i.e.
\begin{equation}
    J_{ij}=J_{|i-j|}
\end{equation}
in one spatial dimension. This means that there is redundant information in the matrix which we can remove by Fourier transformation:
    \begin{equation}
    \hat U = \mathrm{exp}\left(-it \sum_{i<j} J_{ij} n_i n_j \right) = \mathrm{exp}\left(-it \sum_{k} J_k |n_k|^2 \right),
\end{equation}
where $\ n_k=\sum_j e^{-ikj} n_j$ and $J_k=\sum_r e^{-ikr} J_r=1/k^2$ for materials. Note that, because $J_r$ is real and symmetric, $J_k$ is real.
In Fourier space, there are only $N$ terms in the Hamiltonian and therefore, it should be possible to apply $\hat U$ in depth $\mathcal{O}(1)$. However, to actually perform the rotations in this way, we need to have access to the Fourier-transformed occupation numbers $n_k$, which are complex numbers.

The scheme in Ref.~\cite{low2019a} enables this by using a quantum circuit of the classical fast Fourier transform, discussed in more detail in Ref.~\cite{asaka2020}. The intuitive reason why $\mathcal{O}(N \mathrm{log}(N))$ ancillas suffice is because we can encode $n_k$ for a single $k$ with $\mathcal{O}(\log(N))$ ancillas using a fixed-precision binary register. The quantum version of the classical FFT can be applied in 
$\mathcal{O}(\log(N))$ depth for $N=2^n$ and integer $n$. 

For completeness, we describe how this idea works~\cite{low2019a}. We describe its action on a single occupation number string, e.g. $\ket{0110\cdots 110}$, but the algorithm is fully reversible and acts on all strings in the many-body superposition in parallel:\\
1) Encode binary occupation numbers $\hat n_i$ into a fixed-precision complex register \\
2) Apply the quantum circuit of the classical FFT on these registers, after which $n_k$ is stored in the registers\\
3) Multiply each of the registers with themselves, yielding $|n_k|^2$.\\
4) Apply phase rotations in parallel on all registers, applying $\exp\left(itJ_k |n_k|^2\right)$.\\
5) Apply the inverse QFFT.\\
6) Unencode the registers back into the sites.

All of these operations require at most logarithmic depth, and, therefore, this circuit evolves under the interaction term of the materials Hamiltonian in total depth $\mathcal{O}(\log(N))$, using $\mathcal{O}(N\log(N))$ ancillas.

\subsection*{Speeding up chemistry in the molecular basis}

In this section, we show how to implement a single Trotter step under the quantum chemistry Hamiltonian in the molecular basis in  fermionic quantum computers in linear depth for large molecules, compared to quadratic depth in current qubit methods.

We first reformulate the quantum chemistry Hamiltonian $H_\mathrm{qchem}$ using the double factorization approach~\cite{motta2021} which replaces the four-fermion term with a density-density interaction and which is currently the most efficient formulation for simulating molecules~\cite{gunther2025},
\begin{align}
&\sum_{i<j<k<l=1}^NV_{ijkl} c^{\dagger}_ic^{\dagger}_j c_kc_l\\ &= \sum_{i<j=1}^N S_{ij} c^{\dagger}_i c_j + \frac{1}{2}\sum_{\ell=1}^L \left(U^{(\ell)}\right)^{\dagger} \left(\sum_{p=1}^{\rho_{\ell}} \lambda^{(\ell)}_{p} n_{p} \right)^2{U}^{(\ell)}.
\end{align}
The $U^{(\ell)}$ are fermionic basis rotations, implemented via a product of Givens rotations. $\rho_{l}\leq N$ is a truncation parameter. The parameters $S_{ij}$, $\lambda^{(\ell)}_{p}$ can be efficiently computed classically from the $V_{ijkl}$, as can be which Givens rotations need to be performed within the $U^{(\ell)}$.  

A single first-order Trotter step is then given by
\begin{equation}
e^{i \Delta t H_\mathrm{qchem}}=e^{i \Delta t(h+S)} U^{(1)} \prod_{\ell=1}^L e^{i \Delta t V^{(\ell)}} \tilde{U}^{(\ell)}+\mathcal{O}(\Delta t)^2,
\end{equation}
where $h$ is the quadratic part of the Hamiltonian,  $\tilde{U}^{(\ell)}=U^{\dagger(\ell)} U^{(\ell+1)}$, and $V^{(\ell)}= \sum_{i j=1}^{\rho_{\ell}} \frac{\lambda_i^{(\ell)} \lambda_j^{(\ell)}}{2} n_i^{(\ell)} n_j^{(\ell)}$.

For small $N$, $\rho_\ell=\mathcal{O}(N)$ and $L=\mathcal{O}(N)$~\cite{motta2021}. Therefore, $V^\ell$ requires circuit depth $N$. In $U^{(l)}$, there are  ${N\choose 2}-{N-\rho_{l}\choose 2}=\mathcal{O}(\rho_l N)$ Givens rotations, which can be implemented in depth $\mathcal{O}(N)$ with Fermi-swap networks in qubits~\cite{Kivlichan2018}. Therefore, the total depth is $\mathcal{O}(LN)$. This is already optimal (at least without ancillas) and therefore, fermionic quantum computers cannot improve on the asymptotic scaling (but may improve the prefactor).

However, for large molecules requiring a large basis set $N$, $\rho_l=\mathcal{O}(1)$ (and $L=\mathcal{O}(N)$)~\cite{motta2019}. Therefore, the depth for $V^\ell$ is $\mathcal{O}(1)$. Furthermore, the rotation $\tilde U^{(\ell)}$ from one $\ell$ to another only requires $\mathcal{O}(\rho_l \rho_{l+1})=\mathcal{O}(1)$ Givens rotations~\cite{motta2021}. For qubits using a Fermi-swap network, $\tilde U^{(\ell)}$ would still have depth $\mathcal{O}(N)$. By contrast, for fermions, the $\mathcal{O}(1)$ Givens rotations (which are essentially hopping gates) only require depth $\mathcal{O}(1)$! Thus, $e^{i \Delta t V^{(\ell)}} \tilde{U}^{(\ell)}$ only requires depth $\mathcal{O}(1)$ (and also the same number of gates) for fermionic quantum computers, and, therefore, the overall Trotter step requires at most depth $\mathcal{O}(L)$. This means that fermionic quantum computers can simulate large molecules with linear depth whereas the Fermi-swap network qubit implementation takes quadratic depth.

In fact, the observation that $\prod_{\ell=1}^L e^{i \Delta t V^{(\ell)}} \tilde{U}^{(\ell)}$ only requires  $\mathcal{O}(L)$ gates means that likely the depth can be reduced further. In case the $V^{(\ell)}$ act on disjunct sets of sites, only $\Omega(1)$ depth is required. Of course, $e^{i \Delta t(h+S)} U^{(1)}$ is still restricted to depth $\mathcal{O}(N)$ just because of parallelization restriction, but as we discussed in the context of materials simulation, this restriction can be lifted by introducing $\mathcal{O}(N^2)$ ancillas~\cite{hoyer2005}, leading to depth $\mathcal{O}(\log(N))$. From these observations, we conjecture that a sub-linear depth for large molecules might be possible in practice in fermionic quantum computers. Testing this conjecture would require examining the structure of the $V^{(\ell)}$ for molecules of interest.

\end{document}